\newtheorem{theorem}{Theorem}
\newtheorem{lemma}{Lemma}
\newtheorem{corollary}{Corollary}
\theoremstyle{definition}
\begin{document}

\begin{titlepage}
\begin{center}
\vspace*{-2\baselineskip}
\begin{minipage}[l]{7cm}
\flushleft
\includegraphics[width=2 in]{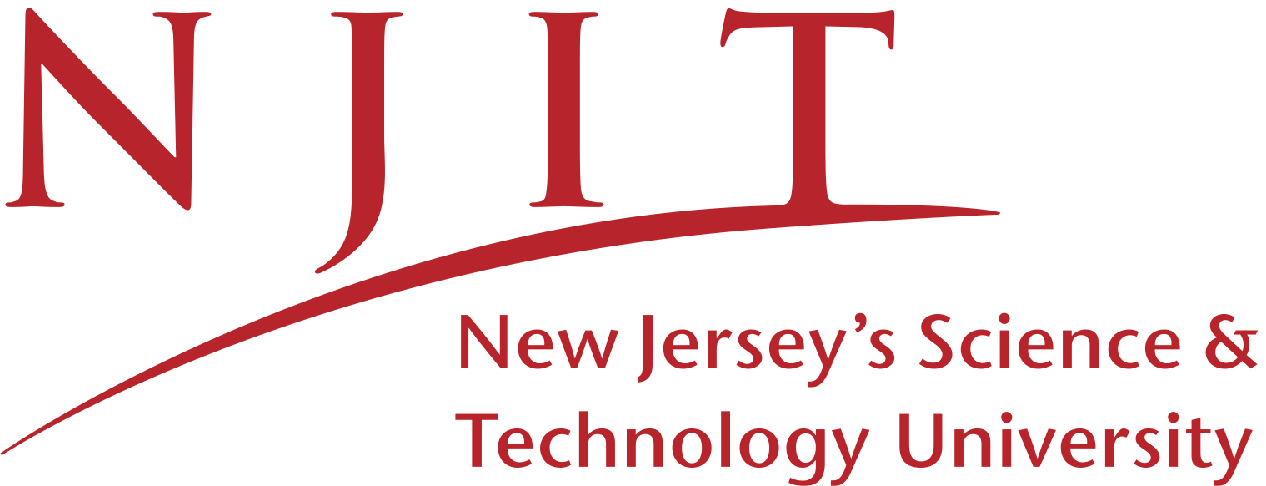}
\end{minipage}
\hfill
\begin{minipage}[r]{7cm}
\flushright
\includegraphics[width=1 in]{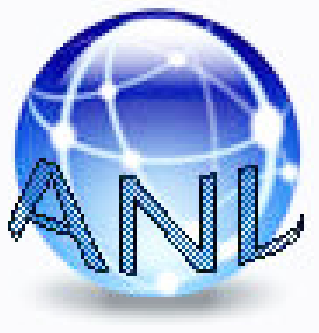}%
\end{minipage}

\vfill

\textsc{\LARGE Network Utility Aware Traffic Loading Balancing in Backhaul-constrained Cache-enabled Small Cell Networks with Hybrid Power Supplies}\\

\vfill
\textsc{\LARGE Tao Han\\[12pt]
\LARGE NIRWAN ANSARI}\\
\vfill
\textsc{\LARGE TR-ANL-2014-007\\[12pt]
\LARGE Sep. 29, 2014}\\[1.5cm]
\vfill
{ADVANCED NETWORKING LABORATORY\\
 DEPARTMENT OF ELECTRICAL AND COMPUTER ENGINEERING\\
 NEW JERSY INSTITUTE OF TECHNOLOGY}
\end{center}
\end{titlepage}

\title{Network Utility Aware Traffic Loading Balancing in Backhaul-constrained Cache-enabled Small Cell Networks with Hybrid Power Supplies}
\author{\IEEEauthorblockN{Tao Han, \emph{{Student Member, IEEE}}, and
Nirwan Ansari, \emph{{Fellow, IEEE}}}\\
\IEEEauthorblockA{Advanced Networking Laboratory \\
Department of Electrical and Computer Engineering \\
New Jersey Institute of Technology, Newark, NJ, 07102, USA\\
Email:  \{th36, nirwan.ansari\}@njit.edu}

\thanks{This work was supported in part by NSF under grant no. CNS-1218181 and no. CNS-1320468.}}
\maketitle
\pagestyle{headings}
\thispagestyle{empty}
\begin{abstract}
Explosive data traffic growth leads to a continuous surge in capacity demands across mobile networks. In order to provision high network capacity, small cell base stations (SCBSs) are widely deployed. Owing to the close proximity to mobile users, SCBSs can effectively enhance the network capacity and offloading traffic load from macro BSs (MBSs). However, the cost-effective backhaul may not be readily available for SCBSs, thus leading to backhaul constraints in small cell networks (SCNs). Enabling cache in BSs may mitigate the backhaul constraints in SCNs. Moreover, the dense deployment of SCBSs may incur excessive energy consumption. To alleviate brown power consumption, renewable energy will be explored to power BSs. In such a network, it is challenging to dynamically balance traffic load among BSs to optimize the network utilities. In this paper, we investigate the traffic load balancing in backhaul-constrained cache-enabled small cell networks powered by hybrid energy sources. We have proposed a network utility aware (NUA) traffic load balancing scheme that optimizes user association to strike a tradeoff between the green power utilization and the traffic delivery latency. On balancing the traffic load, the proposed NUA traffic load balancing scheme considers the green power utilization, the traffic delivery latency in both BSs and their backhaul, and the cache hit ratio. The NUA traffic load balancing scheme allows dynamically adjusting the tradeoff between the green power utilization and the traffic delivery latency. We have proved the convergence and the optimality of the proposed NUA traffic load balancing scheme. Through extensive simulations, we have compared performance of the NUA traffic load balancing scheme with other schemes and showed its advantages in backhaul-constrained cache-enabled small cell networks with hybrid power supplies.
\end{abstract}
\IEEEpeerreviewmaketitle
\section{Introduction}
\label{sec:introduction}
Owing to the proliferation of mobile devices and bandwidth greedy applications, mobile data traffic grows exponentially that has led to a continuous surge in network capacity demands~\cite{Han:2013:OAC}. Small cell base stations (SCBSs) are widely deployed to provision high network capacity~\cite{Andrews:2014:AOLB}. SCBSs, with a small coverage area, can significantly improve the spectrum utilization in mobile networks and thus increase the network capacity \cite{Nakamura:2013:TSC}. However, owing to the disparate transmit powers and base station (BSs) capabilities, traditional traffic load balancing metrics such as the signal-to-interference-plus-noise ratio (SINR) and the received-signal-strength-indication (RSSI) may lead to a severe traffic load imbalance \cite{Andrews:2014:AOLB}. Hence, in order to fully exploit the capacity potential of small cell networks~(SCNs), the traffic load balancing scheme should be well designed.

In mobile networks, traffic load balancing is achieved by executing user association process in which mobile users are assigned to base stations (BSs) for services. Various user association algorithms have been proposed to optimize the traffic load among BSs ~\cite{Andrews:2014:AOLB,Jo:2012:HCN,Ye:2013:UAL,Kim:2012:DOU,Aryafar:2013:RSG}.
Most of the existing solutions optimize the traffic load balancing in a mobile network with the implication that the air interface between BSs and mobile users is the bottleneck of the network. This implication is generally correct for BSs whose deployments are well planned. However, considering the potentially dense deployment of SCBSs, various backhaul solutions, e.g., xDSL, non-line-of-sight (NLOS) microwave, wireless mesh networks, rather than ideal backhaul such as optical fiber and LOS microwave, may be adopted~\cite{Nakamura:2013:TSC}. As a result, backhaul instead of BSs may become the bottleneck of SCNs. To alleviate the backhaul constraints, content caching techniques have been explored to enable caching popular contents in BSs to reduce the traffic load in backhaul \cite{Wang:2014:CIT,Poularakis:2014:AAM,Shanmugam:2013:FWC,Monserrat:2014:RMW}.
Therefore, it is desired to optimize the user association with the consideration of backhaul constraints and the performance of BSs' content cache system in SCNs.

Enhancing energy efficiency is also a critical task for next generation mobile networks \cite{Han:2012:OGC,Hasan:2011:GCN}. Although SCBSs consume less power than macro BSs (MBSs), the number of SCBSs will be orders of magnitude larger than that of MBSs for a wide scale network deployment. Hence, the overall power consumption of SCNs will be phenomenal. As energy harvesting technologies advance, renewable energy such as sustainable biofuels, solar and wind energy can be utilized to power BSs  \cite{Han:2014:PMN}. Telecommunication companies such as Ericsson and Nokia Siemens have designed renewable energy powered BSs for mobile networks \cite{Ericson:2007:SEU}. Define the electricity pulled from renewable energy systems and the power grid as green power and brown power, respectively.
By adopting renewable energy powered BSs, mobile networks may further reduce their brown power consumption.
However, since the electricity generated from renewable energy is not stable, green power may not be a reliable energy source for mobile networks. Therefore, future SCNs are likely to adopt hybrid energy supplies: brown power and green power. Green power is utilized to reduce the brown power consumption while brown power is utilized as a backup power source \cite{Han:2014:vGALA}. In order to optimize green power utilization, it is desirable to balance the traffic load according to the availability of green power. For instance, mobile networks may enable BSs with sufficient green power to serve more traffic load while reducing the traffic load of BSs consuming brown power \cite{Han:2013:OOG}. Such traffic load balancing strategies, however, may not maximize network utilities such as the network capacity and the traffic delivery latency. Hence, a trade-off between the green power utilization and network utilities should be carefully evaluated in balancing traffic load among BSs.

\begin{figure}[htb]
\centering
\includegraphics[scale=0.25]{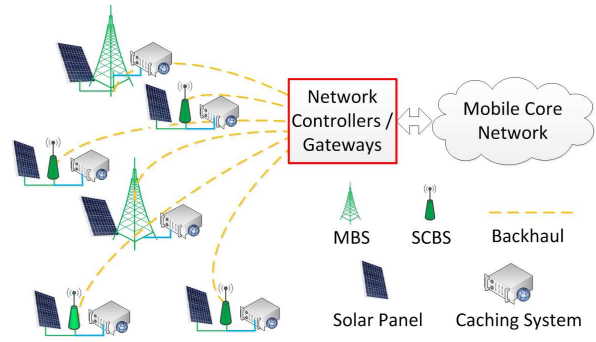}
\caption{The small cell network.}
\label{fig:sc_net_arch}
\end{figure}

In this paper, we investigate the traffic load balancing in backhaul-constrained cache-enabled small cell networks with hybrid power supplies. The network architecture is shown in Fig. \ref{fig:sc_net_arch}. The traffic load balancing in such a network requires the consideration of the green power utilization, BS capacity, backhaul constraints and the performance of cache systems. Therefore, we introduce four network utilities on balancing traffic load among BSs: 1) the green power utilization, 2) the traffic delivery latency in BSs, 3) the traffic delivery latency in backhaul, and 4) the cache hit ratio. The last three network utilities jointly determine the traffic delivery latency of the network. Thus, the awareness of these network utilities helps reduce the traffic delivery latency in the network. Since the green power utilization and the traffic delivery latency are not optimized simultaneously in most scenarios, the tradeoff between the green power utilization and the traffic delivery latency should be determined based on network conditions. We propose the network utility aware (NUA) traffic load balancing scheme to adapt the user association according to the dynamics of these network utilities and strike an adjustable tradeoff between the brown power consumption and the traffic delivery latency. We prove the convergence and the optimality of the proposed NUA traffic load balancing scheme and validate its performance through extensive simulations.

The rest of the paper is organized as follows.
In Section \ref{sec:sys_model}, we define the system model and formulate the traffic load balancing problem. Section \ref{sec:distributed_scheme} presents the proposed NUA traffic load balancing scheme and analyzes its properties. Section \ref{sec:simulation} shows the simulation results, and concluding remarks are presented in Section \ref{sec:conclusion}.

\section{Related Works}
\label{sec:related_work}
Balancing traffic load in mobile networks has been extensively studied in recent years \cite{Wang:2013:MMN,Andrews:2014:AOLB}. In this section, we provide a briefly overview on existing traffic load balancing schemes. The most practical traffic load balancing approach is the cell range expansion (CRE) technique that biases users' receiving signal-to-interference-and-noise-ratios (SINRs) or data rates from some BSs to prioritize these BSs in associating with users \cite{Damn:2011:S3GPP}. Owing to the transmit power difference between MBSs and SCBSs, a large bias is usually given to SCBSs to offload users to small cells \cite{Andrews:2014:AOLB}. By applying CRE, a user associates with the BS from which the user receives the maximum biased SINR or biased data rate \cite{Jo:2012:HCN}. Deriving the optimal bias for BSs is challenging. Singh \emph{et al.} \cite{Singh:2013:OHN} investigated the impact of the bias on network performances and provided a comprehensive analysis on traffic load balancing using CRE in heterogeneous mobile networks.

Optimization theory and game theory have been adopted to solve the traffic load balancing problem. Ye~\emph{et~al.}~\cite{Ye:2013:UAL} modeled the traffic load balancing problem as a utility maximization problem and developed distributed user association algorithms using the primal-dual decomposition. Kim~\emph{et al.}~\cite{Kim:2012:DOU} proposed an $\alpha$-optimal user association algorithm to achieve flow level load balancing under spatially heterogeneous traffic distribution. The proposed algorithm is based on convex optimization theory and may maximize different network utilities by selecting the value of $\alpha$. Aryafar~\emph{et al.}~\cite{Aryafar:2013:RSG} applied game theory to solve the traffic load balancing problem. The authors modeled the problem as a congestion game in which users are the players and user association decisions are the actions. Pantisano~\emph{et al.}~\cite{Pantisano:2014:CUA} formulated the traffic load balancing problem in backhaul constrained SCNs as a one-to-many matching game between SCBSs and users and proposed a distributed algorithm based on the deferred acceptance scheme to obtain a stable matching for mobile users.

Recognizing the green power utilization as one of the performance metrics when balancing the traffic load, Zhou~\emph{et~al.}~\cite{Zhou:2010:ESA:} proposed a handover parameter tuning algorithm for target cell selection, and a power control algorithm for coverage optimization to guide mobile users to access the BSs with renewable energy supply. Han and Ansari \cite{Han:2013:OOG} proposed to optimize the utilization of green power for cellular networks by optimizing BSs' transmit powers. The proposed algorithm achieves significant brown power savings by scheduling the green power consumption along the time domain for individual BSs, and balancing the green power consumption among BSs. The authors have also proposed a user association framework that jointly optimizes the traffic delivery latency and the green power utilization \cite{Han:2013:GALA,Han:2014:vGALA}.

\section{System Model and Problem Formulation}
\label{sec:sys_model}
In this section, we present the system model and the problem formulation. The system model includes the traffic and QoS model, and the energy model.
\subsection{Traffic and QoS model}
\label{subsec:qos_model}
Denote $\mathcal{B}$ as the set of BSs including both MBSs and SCBSs and $\mathcal{A}$ as the coverage area of all BSs. Here, a BS refers to either a MBS or a SCBS. Since BSs are equipped with cache, users' data requests can be fulfilled by the cache system if the requested content is cached; otherwise, the requested content is retrieved from Internet. Retrieving contents from Internet generates traffic loads in a BS's backhaul. Therefore, we model the traffic delivery process as a queuing system as shown in Fig.~\ref{fig:sc_queue_sys}.

\begin{figure}[htb]
\centering
\includegraphics[scale=0.3]{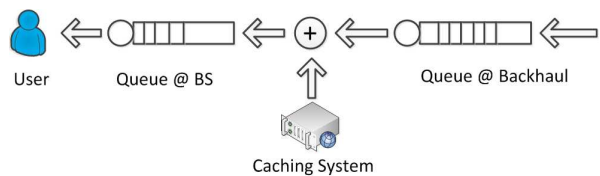}
\caption{The traffic delivery process as a queuing system.}
\label{fig:sc_queue_sys}
\end{figure}

The performance of a cache system is commonly evaluated based on the cache hit ratio that is defined as the ratio between the number of cache hits and the total requests observed over a period of time~\cite{Gomma:2013:EIC}. Many analytical models have been proposed to estimate the hit ratio of a cache system and various content caching strategies have been designed to optimize the performance of cache systems~\cite{Gomma:2013:EIC,Breslau:1999:WCZ,Rodriguez:2001:AWC,Jelenkovic:2003:AIL,Zhang:2012:OWA}.
Thus, in the paper, we assume that the hit ratio of a cache system in a BS can be estimated for the time duration of one user association process and denote $0\leq\alpha_{j}\leq 1$ as the hit ratio of the cache system in BS $j$. Note that how to optimize and estimate the hit ratio is out of the scope of this paper.

Let a mobile user at location~$x$ associate with BS $j$. We assume that the traffic arrives at BS $j$'s backhaul toward the user according to a Poisson process with the arrival rate equaling to $\tilde{\lambda}(x)$, and the traffic loads per arrival (packet sizes per arrival) have an exponential distribution with the average traffic load of $\nu(x)$. We assume that the users associated with BS $j$ are uniformly distributed in its coverage area and the traffic arrival processes are independent. For presentation simplicity, we further assume that no users share the same locations, i.e., only one user at location~$x$. Since the traffic arrival toward a location is a Poisson process, the traffic arrival in BS $j$'s backhaul, which is the sum of the traffic arrivals from its coverage area, is also a Poisson process.
Although BSs may adopt different access technologies as their backhaul, it is reasonable to assume the expected data rates of the backhaul are constant in the time duration of one user association process \cite{Pantisano:2014:CUA}. Since the traffic load per arrival follows an exponential distribution, the traffic delivery time (service time) of the backhaul is also an exponential distribution. Therefore, the traffic delivery in backhaul simply realizes an M/M/1 queuing system.

Denoting $R_{j}$ as the average data rate of BS $j$'s backhaul. To fulfill the traffic demand of the user at location~$x$ , the required service time in BS $j$'s backhaul is
\begin{equation}
\label{eq:bh_required_time}
\tilde{\gamma}(x)= \frac{\nu(x)}{R_{j}}.
\end{equation}
The average traffic load density generated by a user at location~$x$ in BS~$j$'s backhaul is
\begin{equation}
\label{eq:bh_point_load}
\tilde{\varrho}_{j}(x)=\frac{\tilde{\lambda}(x)\nu(x)\eta_{j}(x)}{R_{j}}
\end{equation}
Here, $\eta_{j}(x)=\{0,1\}$ is an indicator function. If $\eta_{j}(x)=1$, the user at location~$x$ is associated with BS~$j$; otherwise, the user is not associated with BS $j$.
Since mobile users are uniformly distributed in the area, the traffic load in BS~$j$'s backhaul can be expressed as
\begin{equation}
\label{eq:backhaul_load}
\tilde{\rho}_{j}=\int_{x \in \mathcal{A}}\tilde{\varrho}_{j}(x)dx.
\end{equation}
According to the properties of the M/M/1 queue \cite{Kleinrock:1976:QS}, the average waiting time for traffic load $\nu(x)$ in BS $j$'s backhaul is
\begin{equation}
\label{eq:bh_ave_waiting_time}
\tilde{W}_{j}(x)=\frac{\tilde{\rho}_{j}\nu(x)}{R_{j}(1-\tilde{\rho}_{j})}.
\end{equation}
Denote $\tilde{\mu}_{j}(x)$ as the latency ratio that measures how much time a user at location~$x$ must be sacrificed in waiting for per unit service time in BS $j$'s backhaul.
\begin{equation}
\label{eq:bh_waiting_per_service}
\tilde{\mu}_{j}(x)=\frac{\tilde{W}_{j}(x)}{\tilde{\gamma}(x)}=\frac{\tilde{\rho}_{j}}{1-\tilde{\rho}_{j}}.
\end{equation}
Since $\tilde{\mu}_{j}(x)$ only depends on the traffic load in BS $j$'s backhaul, all the users associated with BS $j$ have the same latency ratio. Thus, we define
\begin{equation}
\label{eq:bh_latency_ratio}
\tilde{\mu}_{j}(\tilde{\rho_{j}})=\frac{\tilde{\rho_{j}}}{1-\tilde{\rho_{j}}}
\end{equation}
as the latency ratio of BS $j$'s backhaul.
A smaller $\tilde{\mu}_{j}(\tilde{\rho_{j}})$ indicates that BS~$j$'s backhaul introduces less latency to its associated users.

According to Burke's Theorem~\cite{Kleinrock:1976:QS}, the traffic departure process at a BS's backhaul is a Poisson process with average departure rate equaling to the average traffic arrival rate. Therefore, the average traffic arrival rate in BS $j$ toward a user at location~$x$ equals to $\tilde{\lambda}(x)$. Since the hit ratio of BS~$j$'s cache system is $\alpha_{j}$, the data traffic from the backhual accounts for $(1-\alpha_{j})$ of the total traffic load toward the user at location~$x$. Therefore, the average traffic arrival rate in BS~$j$ toward the user at location~$x$ is
\begin{equation}
\lambda(x)=\frac{\tilde{\lambda}(x)}{(1-\alpha_{j})}.
\end{equation}
Since $\alpha_{j}$ is assumed to be a constant during one user association process, the traffic arrival process toward the user at location~$x$ is a Poisson process. In BS $j$, users at different locations may have different data rates depending on channel conditions. When associating with BS $j$, the user's data rate, $r_{j}(x)$, can be generally expressed as a logarithmic function of the perceived SINR, $SINR_{j}(x)$, according to the Shannon-Hartley Theorem~\cite{Kim:2012:DOU},
\begin{equation}
\label{eq:user_rate}
r_{j}(x)=log_{2}(1+SINR_{j}(x)).
\end{equation}
Here,
\begin{equation}
\label{eq:user_SINR}
SINR_{j}(x)=\frac{P_{j}g_{j}(x)}{\sigma^{2}+\sum_{k \in \mathcal{B},k\neq j}P_{k}g_{k}(x)}.
\end{equation}
Here, $P_{j}$ is the transmission power of BS~$j$, and $\sigma^{2}$ denotes the noise power level. Since the users' data rate is generally distributed, the service time in BS~$j$ follows a general distribution. Therefore, a BS's downlink transmission process realizes a M/G/1 processor sharing (PS) queue, in which multiple users share the BS's downlink radio resource~\cite{Kleinrock:1976:QS}.

In mobile networks, various downlink scheduling algorithms have been proposed to enable proper sharing of the limited radio resource in a BS. According to the scheduling algorithm, users may be assigned different priorities on sharing the radio resource. For analytical simplicity, we assume that mobile users are served based on the round robin (RR) fashion. Then, the average traffic load density at location~$x$ in BS~$j$ is calculated as
\begin{equation}
\label{eq:bs_point_load}
\varrho_{j}(x)=\frac{\lambda(x)\nu(x)\eta_{j}(x)}{r_{j}(x)}
\end{equation}
The traffic load in BS~$j$ can be expressed as
\begin{equation}
\label{eq:bs_load}
\rho_{j}=\int_{x \in \mathcal{A}}\varrho_{j}(x)dx.
\end{equation}
This value of $\rho_{j}$ indicates the fraction of time during which BS~$j$ is busy.
To fulfill the traffic demand of a user located at~$x$, the required service time in BS~$j$ is
\begin{equation}
\label{eq:required_time}
\gamma(x)= \frac{\nu(x)}{r_{j}(x)}.
\end{equation}
Since the traffic delivery process in a BS realizes a M/G/1-RR queue, the average traffic delivery time for the user in BS~$j$~\cite{Kleinrock:1976:QS} is
\begin{equation}
\label{eq:traffic_deliver_time}
T_{j}(x)=\frac{\nu(x)}{r_{j}(x)(1-\rho_{j})}.
\end{equation}
The average waiting time for traffic load $\nu(x)$ in BS~$j$ is
\begin{equation}
\label{eq:ave_waiting_time}
W_{j}(x)=T_{j}(x)-\gamma(x)=\frac{\rho_{j}\nu(x)}{r_{j}(x)(1-\rho_{j})}.
\end{equation}
Denote $\mu_{j}(x)$ as the latency ratio of BS~$j$ for a user at location~$x$.
\begin{equation}
\label{eq:waiting_per_service}
\mu_{j}(x)=\frac{W_{j}(x)}{\gamma(x)}=\frac{\rho_{j}}{1-\rho_{j}}.
\end{equation}
$\mu_{j}(x)$ only depends on the traffic load in BS $j$. Therefore, all the users associated with BS~$j$ have the same latency ratio. Thus, we define
\begin{equation}
\label{eq:latency_ratio}
\mu_{j}(\rho_{j})=\frac{\rho_{j}}{1-\rho_{j}}
\end{equation}
as the latency ratio of BS $j$. A smaller $\mu_{j}(\rho_{j})$ indicates that BS~$j$ introduces less latency to its associated users. In this paper, we use $\mu_{j}$ and $\tilde{\mu}_{j}$ to reflect the traffic delivery latency in BS~$j$ and its backhual, respectively, we adopt $\mu_{j}(\rho_{j})+\tilde{\mu}_{j}(\tilde{\rho}_{j})$ as the QoS model that indicates the latency of delivering traffic through BS~$j$.

\subsection{Energy model}
\begin{figure}
\centering
\includegraphics[scale=0.3]{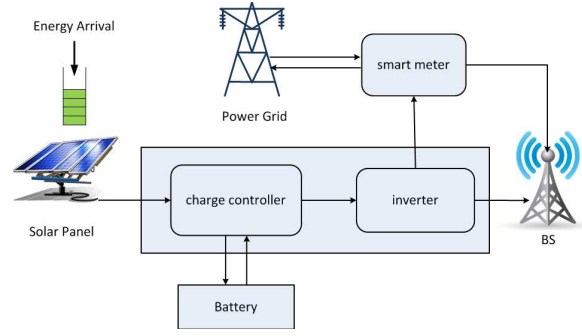}
\caption{A hybrid energy powered BS.}
\label{fig:green_bs}
\end{figure}
In the network, BSs have their own renewable energy systems (solar panels) for generating electricity. Meanwhile, BSs are connected with the power grid for electricity supplies. Thus, BSs are powered by hybrid energy sources: green power and brown power. A BS consumes brown power when green power is not sufficient. We assume that the green power systems in MBSs have a higher energy generation capacity than that of SCBSs because MBSs usually consume more energy than SCBSs owing to a relatively large coverage area. Fig.~\ref{fig:green_bs} shows a reference design of a hybrid energy powered BS~\cite{Han:2014:PMN}.

The charge controller optimizes the green power utilization based on the solar power intensity, the power consumption of BSs, and prices of energy drawn from the power grid. Based on the optimization, the charge controller determines how much green power should be utilized to power a BS during a specific time period, e.g., the time duration between two consecutive traffic load balancing procedures \cite{Han:2014:vGALA}. In this paper, we focus on how to balance traffic load among BSs to reduce the traffic delivery latency as well as the brown power consumption within the duration of a traffic balancing procedure. Investigating how to optimize the green power utilization over the time horizon is out of the scope of this paper. Thus, we assume that the amount of available green power for powering a BS within the duration of one user association process is given by the charge controller as a constant \cite{Han:2014:vGALA}. This assumption is reasonable because the traffic load balancing process is at a time scale of several minutes~\cite{Kim:2012:DOU} while solar power generation is usually modeled at a time scale of a hour~\cite{Farbod:2007:RAO}. Denote $e_{j}$ as the amount of green power for powering BS~$j$ in a user association procedure. If BS~$j$'s power consumption is larger than $e_{j}$, the BS consumes brown power. Otherwise, we simply model the BS's brown power consumption are zero\footnote{We do not consider the redistribution of the residual green power in our model, which is out of the scope of this paper.}.

The BS's power consumption includes two parts: the static power consumption and the dynamic power consumption~\cite{Auer:2011:HME}. The static power consumption is the power consumption of a BS without carrying any traffic load. The dynamic power consumption refers to the additional power consumption incurred by traffic load in the BS, which can be well approximated by a linear function of the traffic load \cite{Auer:2011:HME}. Denote $p^{s}_{j}$ as BS~$j$'s static power consumption. Then, BS $j$'s power consumption is
\begin{equation}
\label{eq:bs_power_consumption}
p_{j}=\beta_{j}\rho_{j}+p^{s}_{j}.
\end{equation}
Here, $\beta_{j}$ is the load-power coefficient that reflects the relationship between BS~$j$'s traffic load and its dynamic power consumption. The BS power consumption model can be adjusted to model the power consumption of either MBSs or SCBSs by incorporating and tweaking the static power consumption and the load-power
coefficient \cite{Han:2014:vGALA}. The BS~$j$'s brown power consumption is
\begin{equation}
\label{eq:bs_brown_energy}
p^{b}_{j}=\max{(p_{j}-e_{j},0)}.
\end{equation}
\subsection{Problem formulation}
In determining the user association, the network aims to not only enhance the network QoS by reducing the traffic delivery latency but also reduce the brown power consumption by improving the green power utilization. Owing to the dynamics of the data traffic and green power, the user association that minimizes the traffic delivery latency does not necessarily maximize the green power utilization. Thus, the traffic load balancing problem strives for a trade-off between the traffic delivery latency and the brown power consumption.

According to E.q. (\ref{eq:bs_brown_energy}), brown power is consumed only when green power is not sufficient ($e_{j}<p_{j}$). Given $e_{j}$, the maximum traffic load can be supported by green power in BS~$j$ is
\begin{equation}
\label{eq:bs_green_capacity}
\hat{\rho}_{j}=max(0, min(\frac{e_{j}-p_{j}^{s}}{\beta_{j}},1-\epsilon)).
\end{equation}
Here, $\epsilon$ is an arbitrary small positive constant to guarantee $0\leq\hat{\rho}_{j}<1$. Define $\hat{\rho}_{j}$ as BS~$j$'s green traffic capacity. When BS~$j$'s traffic load is larger than $\hat{\rho}_{j}$, BS~$j$ consumes brown power. In this case, it is desirable to offload data traffic from BS $j$ to alleviate its brown power consumption. Let
\begin{equation}
\label{eq:load_diff}
\bar{\rho}_{j}=\rho_{j}-\hat{\rho}_{j}.
\end{equation}
When $\bar{\rho}_{j}>0$, BS $j$'s traffic load is larger than its green power capacity and thus it is desirable to offload traffic from BS $j$ to save brown power; when $\bar{\rho}_{j}<0$, it is desirable to let BS $j$ carry additional traffic load to enhance the usage of green power and thus reduce other BSs' brown power consumption. However, balancing traffic purely based on the energy consumption may lead to heavy traffic congestion that increases the traffic delivery latency in BSs. In order to strive a balance, we introduce latency weights for individual BSs. Denote
\begin{equation}
\label{eq:latency_weight}
w_{j}(\rho_{j})=e^{\kappa\bar{\rho}_{j}}
\end{equation}
as BS $j$'s latency weight. $\kappa\geq 0$ is a system parameter that adjusts the value of the latency weight.

Aiming to save brown power as well as to reduce the traffic delivery latency of the network, the traffic load balancing problem is formulated as
\begin{eqnarray}
\label{eq:object_network_goal}
\min_{\boldsymbol{\eta}} && \sum_{j \in \mathcal{B}}w_{j}(\rho_{j})(\mu_{j}(\rho_{j})+\tilde{\mu}_{j}(\tilde{\rho}_{j}))\\
\label{eq:constraint_omge}
subject\; to: && \rho_{j}=\int_{x \in \mathcal{A}}\frac{\lambda(x)\nu(x)\eta_{j}(x)}{r_{j}(x)}dx, \nonumber\\
&& \tilde{\rho}_{j}=\int_{x \in \mathcal{A}}\frac{\tilde{\lambda}(x)\nu(x)\eta_{j}(x)}{R_{j}}dx, \nonumber\\
&& 0\leq\rho_{j}\leq 1-\epsilon,  \nonumber\\
&& 0\leq\tilde{\rho}_{j}\leq 1-\epsilon, \nonumber\\
&& \eta_{j}(x)=\{0,1\}, \forall j \in\mathcal{B}, x \in\mathcal{A}.
\end{eqnarray}
Here, $\boldsymbol{\eta}=\{\boldsymbol{\eta}_{j}|j\in\mathcal{B}\}$ and $\boldsymbol{\eta}_{j}=\{\eta_{j}(x)|x\in\mathcal{A}\}$. Based on the formulation, if BS $j$ has sufficient green power ($\tilde{\rho_{j}}\geq\rho_{j}$), ${0<w_{j}(\rho_{j})\leq 1}$; otherwise, $w_{j}(\rho_{j})>1$. A large latency weight grants a BS a high priority in minimizing Eq. (\ref{eq:object_network_goal}) as compared with those of the BSs having a small latency weight. In other words, a large latency weight grants a BS a high priority in offloading traffic. As compared with $w_{j}(\rho_{j})\leq 1$, $w_{j}(\rho_{j})>1$ enables BS $j$ to achieve a smaller latency ratio. Since $\frac{d\mu_{j}(\rho_{j})}{d\rho_{j}}>0$ and $\frac{d\tilde{\mu}_{j}(\tilde{\rho}_{j})}{d\tilde{\rho}_{j}}>0$, a small latency ratio indicates that BS $j$ carries a lighter traffic load, which is desirable for a BS which is consuming brown power ($w_{j}(\rho_{j})\leq 1$). $\kappa$ is a system parameter that enables the network dynamically controlling the trade-off between the brown power consumption and the traffic delivery latency.

\section{Network Utility Aware Traffic Load Balancing}
\label{sec:distributed_scheme}
In this section, we propose the network utility aware (NUA) traffic load balancing scheme and prove its properties. The network utilities considered in the traffic load balancing consist of 1) the green power utilization (brown power consumption), 2) the traffic delivery latency in BSs' backhaul, 3) the traffic delivery latency in BSs, and 4) the hit ratio of BSs' cache systems. The proposed network utility aware traffic load balancing scheme is able to adapt the traffic load among BSs and their backhauls according to the dynamics of these network utilities.

\subsection{Traffic load balancing procedures}
The traffic load balancing procedures can be implemented in either a distribute or a centralized fashion~\cite{Han:2014:vGALA}. For a distributed traffic load balancing, users select their serving BSs based on the operating parameters, e.g., traffic loads and data rates received from BSs. This will incur several interactions between BSs and users for updating the operating parameters and BS selections, respectively. For a centralized traffic load balancing scheme, the network collects the operating status information from both BSs and users, and determines the user association for individual users.
A distributed traffic load balancing scheme can be implemented in a centralized fashion by leveraging virtualization techniques \cite{Han:2014:vGALA}. Thus, we propose a distributed traffic offloading procedure that includes four phases. The first phase is the initial user association and network utility measurements. When entering the network, a user simply attaches to any BS to retrieve network utility information. According to the initial user association, BSs measure their traffic load and estimate the traffic load in their backhaul. Based on the measurements, BSs, in the second phase, advertise their network utility information. Denote $\psi(\boldsymbol{\eta})=\sum_{j\in\mathcal{B}}w_{j}(\rho_{j})(\mu_{j}(\rho_{j})+\tilde{\mu}_{j}(\tilde{\rho}_{j}))$. In the third phase, the users select their serving BSs according to the advertised network utility information and the downlink data rates to minimize $\psi(\boldsymbol{\eta})$. In the fourth phase, the BSs and users iteratively update their network utilities (the second phase) and BS selections (the third phase), respectively, until the user association converges.

\subsection{The network utility aware user association}
The network utility aware user association scheme consists of a user side algorithm and a BS side algorithm. The user side algorithm based on the network utility advertisement selects the optimal serving BS for individual users while the BS side algorithm updates individual BSs' network utility advertisements based on the user association. In designing the network utility aware user association scheme, we make the following assumptions:
\begin{enumerate}
\item We assume that the time scale of the traffic arrival and departure process is faster relative to that of BSs in advertising their network utility information. That is to say, BSs advertise their network utility information after the system exhibits the stationary performance.
\item We assume that the green power generation rate is consistent during the time period of establishing a stable user association \cite{Han:2013:GALA}.
\item We assume that all the BSs are synchronized and advertise their network utility simultaneously and the system parameter $\kappa$ does not change during one user association process.
\item We assume that a BS's cache hit ratio is constant within the duration of one user association.
\end{enumerate}

The feasible set for the traffic load balancing problem in Eq. (\ref{eq:object_network_goal}) is
\begin{align}
\label{eq:feasible_set}
\mathcal{F}=\lbrace &\boldsymbol{\eta}|0\leq\tilde{\rho}_{j}\leq 1-\epsilon,\nonumber\\
&0\leq\rho_{j}\leq 1-\epsilon,\; \sum_{j\in\mathcal{B}}\eta_{j}(x)=1,\nonumber \\
& \eta_{j}(x)=\{0,1\},\;\forall j\in\mathcal{B},\;\forall x \in\mathcal{A}\rbrace
\end{align}
Since $\eta_{j}(x)=\{0,1\}$, $\psi(\boldsymbol{\eta})$ is not continuous differentiable. In order to derive the user side algorithm and the BS side algorithm for the NUA traffic load balancing scheme, we relax the feasible set by letting $0\leq\eta_{j}(x)\leq 1$. Then, the relaxed feasible set is
\begin{align}
\label{eq:rlx_feasible_set}
\tilde{\mathcal{F}}=\lbrace &\boldsymbol{\eta}|0\leq\tilde{\rho}_{j}\leq 1-\epsilon,\nonumber\\
&0\leq\rho_{j}\leq 1-\epsilon,\; \sum_{j\in\mathcal{B}}\eta_{j}(x)=1,\nonumber \\
& 0\leq\eta_{j}(x)\leq 1,\;\forall j\in\mathcal{B},\;\forall x \in\mathcal{A}\rbrace
\end{align}
After presenting the NUA traffic load balancing scheme, we will prove that the proposed scheme achieves an optimal user association in the feasible set of the traffic load balancing problem.

We define the time interval between two consecutive network utility advertisements as a time slot. Let $\eta_{j}^{k}(x)$ denote whether a user at location $x$ associates with BS $j$ in the $k$th time slot. Denote $\rho_{j}(k)$ and $\tilde{\rho}_{j}(k)$ as the traffic load in BS $j$ and its backhaul in the $k$th time slot, respectively.
Let
\begin{align}
\label{eq:object_dev}
\phi_{j}(k)&=\frac{d\psi(\boldsymbol{\eta})}{d \eta_{j}(x)}\nonumber\\
&=\frac{\lambda(x)\nu(x)}{r_{j}(x)}e^{\kappa(\rho_{j}(k)-\hat{\rho}_{j})}(\frac{\kappa\rho_{j}(k)}{1-\rho_{j}(k)}+\frac{\kappa\tilde{\rho}_{j}(k)}{1-\tilde{\rho}_{j}(k)}
+\frac{1}{(1-\rho_{j}(k))^{2}})\nonumber\\
&+\lambda(x)\nu(x)e^{\kappa(\rho_{j}(k)-\hat{\rho}_{j})}\frac{1-\alpha_{j}}{R_{j}(1-\tilde{\rho}_{j}(x))^{2}}.
\end{align}
Define
\begin{equation}
\label{eq:bs_msg_1}
\theta^{a}_{j}(k)=e^{\kappa(\rho_{j}(k)-\hat{\rho}_{j})}(\frac{\kappa\rho_{j}(k)}{1-\rho_{j}(k)}+\frac{\kappa\tilde{\rho}_{j}(k)}{1-\tilde{\rho}_{j}(k)}+\frac{1}{(1-\rho_{j}(k))^{2}})
\end{equation}
and
\begin{equation}
\label{eq:bs_msg_2}
\theta^{b}_{j}(k)=e^{\kappa(\rho_{j}(k)-\hat{\rho}_{j})}\frac{1-\alpha_{j}}{R_{j}(1-\tilde{\rho}_{j}(x))^{2}}.
\end{equation}
Since $\theta^{a}_{j}(k)$ and $\theta^{b}_{j}(k)$ are calculated based on BS $j$'s network utility, we define $\theta^{a}_{j}(k)$ and $\theta^{b}_{j}(k)$ as the network utility information advertised by BS $j$ in the $k$th time slot.

\subsubsection{The User Side Algorithm}
At the beginning of the $k$th time slot, BSs broadcast their network utility advertisements, e.g., $\theta^{a}_{j}(k)$ and $\theta^{b}_{j}(k)$, to users. The BS selection rule for a user at location~$x$ is
\begin{equation}
\label{eq:bs_selection}
b^{k}(x)= \arg\max_{j \in \mathcal{B}}\frac{r_{j}(x)}{\theta^{a}_{j}(k)+r_{j}(x)\theta^{b}_{j}(k)}.
\end{equation}
Here, $b^{k}(x)$ is the index of the BS selected by the user. Therefore,
\begin{equation}
\label{eq:eta_ua_update}
\setlength{\nulldelimiterspace}{0pt}
\eta^{k}_{j}(x)=\left\{\begin{IEEEeqnarraybox}[\relax][c]{ls}
1,
\; &for $j=b^{k}(x),\forall x \in \mathcal{A}$  \\
0, \; &for $j\neq b^{k}(x),\forall x \in \mathcal{A}$,%
\end{IEEEeqnarraybox}\right.
\end{equation}
\subsubsection{The BS Side Algorithm}
After mobile users select their associating BSs, the user association in BS~$j$, $\boldsymbol{\eta}^{k}_{j}$, is updated. Give the user association, BS $j$ updates its network utility advertisement. BS~$j$ calculates an intermediate user association $\bar{\boldsymbol{\eta}}^{k}_{j}=\{\bar{\eta}^{k}_{j}|j\in\mathcal{B}\}$ as
\begin{equation}
\bar{\boldsymbol{\eta}}^{k}_{j}= (1-\delta^{k})\boldsymbol{\eta}^{k}_{j}+\delta^{k}\bar{ \boldsymbol{\eta}}^{k-1}_{j}.
\end{equation}
Here, $0<\delta^{k}<1$ is an exponential averaging parameter. With the intermediate user association, BS $j$ calculates the intermediate traffic load in the BS and its backhaul. The intermediate traffic load in BS~$j$'s backhaul is
\begin{equation}
\label{eq:backhaul_load_ad}
\tilde{\rho}_{j}(k+1)=\int_{x \in \mathcal{A}}\frac{\lambda(x)\nu(x)\bar{\eta}^{k}_{j}(x)}{R_{j}}dx,
\end{equation}
and intermediate traffic load in BS~$j$ is
\begin{equation}
\label{eq:bs_load_ad}
\rho_{j}(k+1)=\int_{x \in \mathcal{A}}\frac{\lambda(x)\nu(x)\bar{\eta}^{k}_{j}(x)}{r_{j}(x)}dx.
\end{equation}
Based on the intermediate traffic load in both the BS and its backhaul, BS $j$ calculates its network utility advertisements, $\theta^{a}_{j}(k+1)$ and $\theta^{b}_{j}(k+1)$, using Eqs. (\ref{eq:bs_msg_1}) and  (\ref{eq:bs_msg_2}).
\subsection{The properties of the network utility aware user association}
In this subsection, we prove the convergence and the optimality of the proposed NUA traffic load balancing scheme. Since users select serving BSs based on BSs' network utility advertisements, the user association converges when BSs' network utility advertisements are stablilized. A BS's network utility advertisements are determined by its intermediate traffic loads in the BS and its backhaul. On calculating the intermediate traffic loads, the intermediate user association is the only variable. Therefore, when the intermediate user association converges, the intermediate traffic load is stabilized and so are the network utility advertisements.
Therefore, we first prove any BS's network utility advertisements are stabilized by proving that its intermediate user association converges and then show that the user association based on the stabilized network utility advertisements minimizes $\psi(\boldsymbol{\eta})$.


\begin{lemma}
\label{thm:feasible_set}
The relaxed feasible set $\mathcal{\tilde{F}}$ is a convex set.
\end{lemma}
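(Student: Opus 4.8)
The plan is to prove convexity directly from the definition: take any two feasible association profiles and show that every point on the segment joining them is again feasible. The essential observation that makes this routine is that the two load quantities $\rho_{j}$ and $\tilde{\rho}_{j}$ are \emph{linear} functionals of the association variables $\boldsymbol{\eta}$. Indeed, from the constraints in Eq.~(\ref{eq:rlx_feasible_set}) we have $\rho_{j}=\int_{x\in\mathcal{A}}\frac{\lambda(x)\nu(x)\eta_{j}(x)}{r_{j}(x)}dx$ and $\tilde{\rho}_{j}=\int_{x\in\mathcal{A}}\frac{\tilde{\lambda}(x)\nu(x)\eta_{j}(x)}{R_{j}}dx$, each of which is an integral of $\eta_{j}(x)$ against a fixed nonnegative kernel, hence a linear map from $\boldsymbol{\eta}$ into $\mathbb{R}$.

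First I would fix $\boldsymbol{\eta}^{1},\boldsymbol{\eta}^{2}\in\tilde{\mathcal{F}}$ and a scalar $\theta\in[0,1]$, and set $\boldsymbol{\eta}=\theta\boldsymbol{\eta}^{1}+(1-\theta)\boldsymbol{\eta}^{2}$. Using linearity of the integral I would write $\rho_{j}(\boldsymbol{\eta})=\theta\rho_{j}(\boldsymbol{\eta}^{1})+(1-\theta)\rho_{j}(\boldsymbol{\eta}^{2})$, and likewise for $\tilde{\rho}_{j}$. Since $\rho_{j}(\boldsymbol{\eta}^{1})$ and $\rho_{j}(\boldsymbol{\eta}^{2})$ both lie in $[0,1-\epsilon]$, their convex combination also lies in $[0,1-\epsilon]$; the identical argument handles $\tilde{\rho}_{j}$. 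Thus the two load-bound constraints are preserved under convex combination.

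Next I would verify the per-location constraints. For each $x$, $\sum_{j\in\mathcal{B}}\eta_{j}(x)=\theta\sum_{j\in\mathcal{B}}\eta^{1}_{j}(x)+(1-\theta)\sum_{j\in\mathcal{B}}\eta^{2}_{j}(x)=\theta+(1-\theta)=1$, so the normalization equality survives, and $0\leq\eta_{j}(x)\leq1$ is preserved because a convex combination of two numbers in $[0,1]$ stays in $[0,1]$. Since every defining constraint of $\tilde{\mathcal{F}}$ holds for $\boldsymbol{\eta}$, I conclude $\boldsymbol{\eta}\in\tilde{\mathcal{F}}$, which establishes convexity.

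I do not expect any real obstacle here: the entire statement reduces to the fact that $\tilde{\mathcal{F}}$ is an intersection of half-spaces and hyperplanes, each defined by an affine constraint in $\boldsymbol{\eta}$, and such an intersection is always convex. The only point deserving genuine care is making explicit that the load expressions are truly linear in $\boldsymbol{\eta}$ — this is precisely what turns the upper bounds $\rho_{j}\leq1-\epsilon$ and $\tilde{\rho}_{j}\leq1-\epsilon$ into affine (hence convex) constraints rather than something that could spoil convexity, and it is also why relaxing $\eta_{j}(x)\in\{0,1\}$ to $\eta_{j}(x)\in[0,1]$ was necessary in the first place.
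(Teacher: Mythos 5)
Your proof is correct and follows essentially the same route as the paper, whose proof is just a one-line sketch stating that $\tilde{\mathcal{F}}$ contains any convex combination of association vectors; you simply carry out that verification in full, using the key fact that $\rho_{j}$ and $\tilde{\rho}_{j}$ are linear in $\boldsymbol{\eta}$ so every defining constraint is affine and hence preserved under convex combination. No gaps to report.
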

\begin{proof}
The lemma is proved by showing that the set $\mathcal{\tilde{F}}$ contains any convex combination of the user association vector~$\boldsymbol{\eta}$.
\end{proof}

\begin{lemma}
\label{thm:convexity}
$\psi(\boldsymbol{\eta})$ is a convex function of $\boldsymbol{\eta}$ when $\boldsymbol{\eta}$ is defined in $\mathcal{\tilde{F}}$.
\end{lemma}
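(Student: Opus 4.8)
The plan is to exploit the fact that each per-BS load is an affine functional of the association variables, so that convexity of $\psi$ reduces to joint convexity of each summand in the two scalar loads. Concretely, from the constraints in (\ref{eq:constraint_omge}) both $\rho_j=\int_{x\in\mathcal{A}}\frac{\lambda(x)\nu(x)\eta_j(x)}{r_j(x)}dx$ and $\tilde{\rho}_j=\int_{x\in\mathcal{A}}\frac{\tilde{\lambda}(x)\nu(x)\eta_j(x)}{R_j}dx$ are linear, hence affine, maps of the block $\boldsymbol{\eta}_j$. Since precomposing a convex function with an affine map preserves convexity, and a finite sum of convex functions is convex, it suffices to show that the single-BS term $f_j(\rho_j,\tilde{\rho}_j)=e^{\kappa(\rho_j-\hat{\rho}_j)}\big(\frac{\rho_j}{1-\rho_j}+\frac{\tilde{\rho}_j}{1-\tilde{\rho}_j}\big)$ is jointly convex in $(\rho_j,\tilde{\rho}_j)$ on the box $[0,1-\epsilon]\times[0,1-\epsilon]$. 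Lemma \ref{thm:feasible_set} guarantees that the domain $\tilde{\mathcal{F}}$ over which convexity is claimed is itself convex, so this reduction is legitimate.

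Next I would verify joint convexity of $f_j$ by computing its $2\times2$ Hessian in the variables $(\rho_j,\tilde{\rho}_j)$ and showing it is positive semidefinite on the box. Writing $E=e^{\kappa(\rho_j-\hat{\rho}_j)}$, $\mu=\frac{\rho_j}{1-\rho_j}$ and $\tilde{\mu}=\frac{\tilde{\rho}_j}{1-\tilde{\rho}_j}$, and using $\partial_{\rho_j}E=\kappa E$, the diagonal entries are $\partial^2_{\rho_j}f_j=E\big(\kappa^2(\mu+\tilde{\mu})+2\kappa\mu'+\mu''\big)$ and $\partial^2_{\tilde{\rho}_j}f_j=E\,\tilde{\mu}''$, while the cross term is $\partial_{\rho_j}\partial_{\tilde{\rho}_j}f_j=\kappa E\,\tilde{\mu}'$. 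The first diagonal entry is manifestly nonnegative because $\kappa\ge0$ and $\mu,\tilde{\mu}\ge0$, $\mu'=\frac{1}{(1-\rho_j)^2}>0$, $\mu''=\frac{2}{(1-\rho_j)^3}>0$ on the box; the second is nonnegative since $\tilde{\mu}''=\frac{2}{(1-\tilde{\rho}_j)^3}>0$. Thus the two diagonal (principal-minor) conditions for positive semidefiniteness are immediate.

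The main obstacle is the remaining condition, namely nonnegativity of the determinant $\partial^2_{\rho_j}f_j\cdot\partial^2_{\tilde{\rho}_j}f_j-(\partial_{\rho_j}\partial_{\tilde{\rho}_j}f_j)^2\ge0$. This is where the argument genuinely bites, because the shared weight $E$ couples $\rho_j$ and $\tilde{\rho}_j$ and makes the cross partial nonzero; the product structure $w_j(\rho_j)\tilde{\mu}(\tilde{\rho}_j)$ is exactly the kind of product of convex factors that need not be jointly convex. Factoring out $E^2>0$, the determinant condition reduces to $\kappa^2\big(\tilde{\mu}''(\mu+\tilde{\mu})-(\tilde{\mu}')^2\big)+2\kappa\mu'\tilde{\mu}''+\mu''\tilde{\mu}''\ge0$. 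I expect the crux to be controlling the factor $\tilde{\mu}''\tilde{\mu}-(\tilde{\mu}')^2$, which measures the log-convexity of $\tilde{\mu}$ and is in fact negative for $\tilde{\rho}_j<\frac{1}{2}$; the plan is therefore to show that the positive contributions $\tilde{\mu}''\mu$, $2\kappa\mu'\tilde{\mu}''$ and $\mu''\tilde{\mu}''$ dominate, which I anticipate will require keeping $\kappa$ within a suitable range rather than holding for arbitrarily large $\kappa$. Establishing this determinant inequality on the whole box is the key step; once it holds, the Hessian of every summand is positive semidefinite, each $f_j$ is jointly convex, and summing and composing with the affine load maps yields convexity of $\psi(\boldsymbol{\eta})$ on $\tilde{\mathcal{F}}$.
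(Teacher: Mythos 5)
Your reduction (affine load maps, sum over BSs, per-BS Hessian) is legitimate and your Hessian entries are computed correctly, and you have put your finger on exactly the right difficulty --- but the step you leave as a ``plan'' is the entire proof, and it cannot in fact be delivered in the generality the lemma claims. Substituting $s=1/(1-\rho_j)\ge 1$ and $t=1/(1-\tilde{\rho}_j)\ge 1$, your determinant condition factors as
\begin{equation*}
\partial^{2}_{\rho_j}f_j\,\partial^{2}_{\tilde{\rho}_j}f_j-\big(\partial_{\rho_j}\partial_{\tilde{\rho}_j}f_j\big)^{2}
=E^{2}t^{3}\left[\kappa^{2}(2s+t-4)+4\kappa s^{2}+4s^{3}\right],
\end{equation*}
and the bracket is nondecreasing in $t$ and increasing in $s$, so its minimum over the box is attained at $s=t=1$, i.e., at $(\rho_j,\tilde{\rho}_j)=(0,0)$, where the determinant equals $E^{2}(4\kappa+4-\kappa^{2})$. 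Hence the per-BS Hessian is positive semidefinite on the whole box if and only if $\kappa\le 2+2\sqrt{2}\approx 4.83$; for any larger $\kappa$ it is indefinite at lightly loaded points, the negative curvature coming precisely from the product term $w_j(\rho_j)\tilde{\mu}_j(\tilde{\rho}_j)$ you flagged, whose own determinant is $\kappa^{2}E^{2}(2\tilde{\rho}_j-1)/(1-\tilde{\rho}_j)^{4}<0$ for $\tilde{\rho}_j<1/2$. Since the lemma is stated for every $\kappa\ge 0$, and since mixed directions (raise $\rho_j$, lower $\tilde{\rho}_j$) are realizable inside $\tilde{\mathcal{F}}$ because the ratio of the two load densities, $R_j/\big((1-\alpha_j)r_j(x)\big)$, varies with $x$, your proposal as written does not prove the stated result: the restriction on $\kappa$ you anticipate is not a technical convenience but is forced, with sharp threshold $2+2\sqrt{2}$.

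For comparison, the paper's entire proof is the one-sentence assertion that $\bigtriangledown^{2}\psi(\boldsymbol{\eta})>0$ on $\tilde{\mathcal{F}}$ --- i.e., exactly the claim your computation refutes for $\kappa>2+2\sqrt{2}$ (note the paper's own simulations use $\kappa$ up to $7$ and invoke vGALA with $\kappa=6$). So the gap is not a defect of your strategy relative to the paper's; your computation exposes that the lemma is not established by the Hessian route at all, and the coupling constraint $\sum_{j\in\mathcal{B}}\eta_{j}(x)=1$ does not rescue it in general, since a perturbation of BS $j$ can be compensated through a BS with much larger $r(x)$ and $R$, contributing negligible positive curvature. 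A defensible version of the lemma must either restrict the parameter to $\kappa\le 2+2\sqrt{2}$, or restrict the feasible loads (e.g., $\tilde{\rho}_j\ge 1/2$ gives $2s+t-4\ge 0$, so the Hessian test passes for all $\kappa$); your write-up should state such a condition explicitly rather than leaving the determinant inequality as an open step.
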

\begin{proof}
\label{prf:cvx}
The lemma can be proved by showing $\bigtriangledown^{2}\psi(\boldsymbol{\eta})>0$ when $\boldsymbol{\eta}$ is defined in $\mathcal{\tilde{F}}$.
\end{proof}

Let $\bar{\boldsymbol{\eta}}^{k}=\{\bar{\boldsymbol{\eta}}^{k}_{j}|j\in\mathcal{B}\}$ and $\triangle\bar{\boldsymbol{\eta}}^{k}=\bar{\boldsymbol{\eta}}^{k}- \bar{\boldsymbol{\eta}}^{k-1}$.

\begin{lemma}
\label{thm:descent_direction}
When $\triangle\bar{\boldsymbol{\eta}}^{k} \neq 0$, $\triangle\bar{\boldsymbol{\eta}}^{k}$ provides a descent direction of $\psi(\bar{\boldsymbol{\eta}})$ at $\bar{\boldsymbol{\eta}}^{k}$.
\end{lemma}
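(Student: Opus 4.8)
My plan is to recognize this iteration as a conditional-gradient (Frank--Wolfe) step and read the descent property directly off the linear-minimization oracle that the user side algorithm implements, using convexity/differentiability from Lemma~\ref{thm:convexity}.

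First I would rewrite the increment. From $\bar{\boldsymbol{\eta}}^{k}_{j}=(1-\delta^{k})\boldsymbol{\eta}^{k}_{j}+\delta^{k}\bar{\boldsymbol{\eta}}^{k-1}_{j}$ we get $\triangle\bar{\boldsymbol{\eta}}^{k}=\bar{\boldsymbol{\eta}}^{k}-\bar{\boldsymbol{\eta}}^{k-1}=(1-\delta^{k})(\boldsymbol{\eta}^{k}-\bar{\boldsymbol{\eta}}^{k-1})$, with $0<1-\delta^{k}<1$, so the sign of the directional derivative is governed entirely by the direction $\boldsymbol{\eta}^{k}-\bar{\boldsymbol{\eta}}^{k-1}$. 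Next I would identify the advertised data with the gradient: by construction $\phi_{j}(k)=\tfrac{\lambda(x)\nu(x)}{r_{j}(x)}\bigl(\theta^{a}_{j}(k)+r_{j}(x)\theta^{b}_{j}(k)\bigr)=\tfrac{d\psi}{d\eta_{j}(x)}$, evaluated at the loads generated by $\bar{\boldsymbol{\eta}}^{k-1}$, so the field $\{\phi_{j}(k)\}$ is exactly $\nabla\psi(\bar{\boldsymbol{\eta}}^{k-1})$. Because $\lambda(x)\nu(x)>0$ is common to all $j$, the selection rule $b^{k}(x)=\arg\max_{j}\tfrac{r_{j}(x)}{\theta^{a}_{j}(k)+r_{j}(x)\theta^{b}_{j}(k)}$ is identical to $\arg\min_{j}\phi_{j}(k)$: each user is routed to the BS of smallest marginal cost.

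I would then argue that $\boldsymbol{\eta}^{k}$ solves a linear-minimization oracle. The linear functional $\boldsymbol{\eta}\mapsto\langle\nabla\psi(\bar{\boldsymbol{\eta}}^{k-1}),\boldsymbol{\eta}\rangle=\sum_{j}\int_{\mathcal{A}}\phi_{j}(k)\,\eta_{j}(x)\,dx$ decouples across locations under $\sum_{j}\eta_{j}(x)=1$, and for each $x$ its minimum over the per-location simplex is attained by putting all mass on an $\arg\min_{j}\phi_{j}(k)$, which is precisely $\boldsymbol{\eta}^{k}$. Since $\bar{\boldsymbol{\eta}}^{k-1}$ lies in that same simplex, the variational inequality $\langle\nabla\psi(\bar{\boldsymbol{\eta}}^{k-1}),\boldsymbol{\eta}^{k}-\bar{\boldsymbol{\eta}}^{k-1}\rangle\le 0$ follows; note I only need optimality over the box simplex, so the load constraints defining $\tilde{\mathcal{F}}$ play no role at this step. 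Multiplying by $1-\delta^{k}>0$ gives $\langle\nabla\psi(\bar{\boldsymbol{\eta}}^{k-1}),\triangle\bar{\boldsymbol{\eta}}^{k}\rangle\le 0$, which is the directional-derivative form of the descent claim at the base point $\bar{\boldsymbol{\eta}}^{k-1}$ from which the step is taken.

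The main obstacle is upgrading this to \emph{strict} negativity, since a tangent direction ($\langle\nabla\psi,\triangle\bar{\boldsymbol{\eta}}^{k}\rangle=0$) is not a descent direction. I would analyze the per-location gap $\phi_{b^{k}(x)}(k)-\sum_{j}\phi_{j}(k)\bar{\eta}^{k-1}_{j}(x)\le 0$, which vanishes exactly when $\bar{\boldsymbol{\eta}}^{k-1}$ charges only minimal-marginal-cost BSs at $x$. Assuming the minimizer $\arg\min_{j}\phi_{j}(k)$ is unique for almost every $x$ (generic, with a fixed tie-breaking rule), a vanishing gap at $x$ forces $\bar{\eta}^{k-1}_{j}(x)=\eta^{k}_{j}(x)$ there; integrating, the total gap is zero if and only if $\boldsymbol{\eta}^{k}=\bar{\boldsymbol{\eta}}^{k-1}$ almost everywhere, i.e.\ if and only if $\triangle\bar{\boldsymbol{\eta}}^{k}=0$. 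Hence $\triangle\bar{\boldsymbol{\eta}}^{k}\neq 0$ yields $\langle\nabla\psi(\bar{\boldsymbol{\eta}}^{k-1}),\triangle\bar{\boldsymbol{\eta}}^{k}\rangle<0$, establishing the lemma. The delicate point worth spelling out carefully is precisely this degenerate tie case; everything else is the standard conditional-gradient inequality together with the convexity/differentiability supplied by Lemmas~\ref{thm:feasible_set} and~\ref{thm:convexity}.
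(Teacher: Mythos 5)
Your proof is correct and takes essentially the same route as the paper's: both identify the advertised quantities with the gradient, note that the selection rule~(\ref{eq:bs_selection}) routes each user to the BS minimizing the per-location marginal cost $(\theta^{a}_{j}(k)+r_{j}(x)\theta^{b}_{j}(k))/r_{j}(x)$ --- which is exactly the paper's inequality~(\ref{eq:sum_leq_0}) --- and then factor out $(1-\delta^{k})>0$; your conditional-gradient framing is just a cleaner name for that computation. Where you differ is in rigor rather than route, and to your credit: you evaluate the gradient at $\bar{\boldsymbol{\eta}}^{k-1}$, the point from which the advertisements $\theta^{a}_{j}(k),\theta^{b}_{j}(k)$ are actually computed (the paper's statement and proof write $\bar{\boldsymbol{\eta}}^{k}$), and you supply the justification for strictness that the paper omits --- it passes from~(\ref{eq:sum_leq_0}) to~(\ref{eq:sum_less_0}) by merely invoking $\triangle\bar{\boldsymbol{\eta}}^{k}\neq 0$, an inference that can fail when several BSs tie for minimal marginal cost, which is precisely the degenerate case your genericity/tie-breaking argument handles.
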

\begin{proof}
\label{prf:descent_direction}
Since $0\leq\bar{\eta}_{j}^{k}\leq 1,\forall k,\;\forall j\in\mathcal{B}$, $\bar{\boldsymbol{\eta}}$ is defined in $\tilde{\mathcal{F}}$. According Lemmas \ref{thm:feasible_set} and \ref{thm:convexity}, $\psi(\bar{\boldsymbol{\eta}})$ is a convex function of $\bar{\boldsymbol{\eta}}$. Hence, the lemma can be proved by showing $\langle\bigtriangledown{\psi(\bar{\boldsymbol{\eta}})}|_{\bar{\boldsymbol{\eta}}=\bar{\boldsymbol{\eta}}^{k}},\triangle\bar{\boldsymbol{\eta}}^{k}\rangle<0$.

\begin{eqnarray}
\label{eq:prf_descent_direction}
&&\langle\bigtriangledown{\psi(\bar{\boldsymbol{\eta}})}_{\bar{\boldsymbol{\eta}}=\bar{\boldsymbol{\eta}}^{k}},\triangle\bar{\boldsymbol{\eta}}^{k}\rangle\\
&&=\int_{x \in\mathcal{A}} \sum_{j\in\mathcal{B}}\lambda(x)\nu(x)(\bar{\eta}^{k}_{j}(x)-\bar{\eta}^{k-1}_{j}(x))(\frac{\theta^{a}_{j}(k)}{r_{j}(x)}+\theta^{b}_{j}(k))\nonumber\\
&&=(1-\delta^{k})\int_{x \in \mathcal{A}}\lambda(x)\nu(x)\sum_{j\in\mathcal{B}}(\eta^{k}_{j}(x)-\bar{\eta}^{k-1}_{j}(x))\frac{\theta^{a}_{j}(k)+r_{j}(x)\theta^{b}_{j}(k)}{r_{j}(x)} \nonumber\\
\end{eqnarray}
Since
\begin{equation}
\label{eq:eta_m}
\setlength{\nulldelimiterspace}{0pt}
\eta^{k}_{j}(x)=\left\{\begin{IEEEeqnarraybox}[\relax][c]{ls}
1,
\; &for $j=b^{k}(x)$  \\
0, \; &for $j\neq b^{k}(x)$,%
\end{IEEEeqnarraybox}\right.
\end{equation}

\begin{equation}
\label{eq:sum_leq_0}
\sum_{j\in\mathcal{B}}(\eta^{k}_{j}(x)-\bar{\eta}^{k-1}_{j}(x))\frac{\theta^{a}_{j}(k)+r_{j}(x)\theta^{b}_{j}(k)}{r_{j}(x)} \leq 0.
\end{equation}
Because $0<\delta^{k}<1$ and $\triangle\bar{\boldsymbol{\eta}}_{j}^{k} \neq 0$,
\begin{equation}
\label{eq:sum_less_0}
\sum_{j\in\mathcal{B}}(\eta^{k}_{j}(x)-\bar{\eta}^{k-1}_{j}(x))\frac{\theta^{a}_{j}(k)+r_{j}(x)\theta^{b}_{j}(k)}{r_{j}(x)} < 0.
\end{equation}
Thus, $\langle\bigtriangledown{\psi(\bar{\boldsymbol{\eta}})}|_{\bar{\boldsymbol{\eta}}=\bar{\boldsymbol{\eta}}^{k}},\triangle\bar{\boldsymbol{\eta}}^{k}\rangle<0$.
\end{proof}
Denote $\bar{\boldsymbol{\eta}}^{*}$ as the optimal intermediate user association.

\begin{lemma}
\label{thm:exit_delta}
When $\bar{\boldsymbol{\eta}}^{k}\neq\bar{\boldsymbol{\eta}}^{*}, \;\bar{\boldsymbol{\eta}}^{k}\in\tilde{\mathcal{F}}$, there exists $0<\delta^{k}<1$ such that $\psi(\bar{\boldsymbol{\eta}}^{k})< \psi(\bar{\boldsymbol{\eta}}^{k-1})$.
\end{lemma}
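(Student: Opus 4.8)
The plan is to treat the update $\bar{\boldsymbol{\eta}}^{k}=(1-\delta^{k})\boldsymbol{\eta}^{k}+\delta^{k}\bar{\boldsymbol{\eta}}^{k-1}$ as a single conditional-gradient (Frank--Wolfe) step taken from $\bar{\boldsymbol{\eta}}^{k-1}$ along the direction $\boldsymbol{\eta}^{k}-\bar{\boldsymbol{\eta}}^{k-1}$, and to show that this step strictly decreases $\psi$ by combining the descent property of Lemma~\ref{thm:descent_direction} with the convexity of $\psi$ from Lemma~\ref{thm:convexity}. The identity I would lean on is $\triangle\bar{\boldsymbol{\eta}}^{k}=\bar{\boldsymbol{\eta}}^{k}-\bar{\boldsymbol{\eta}}^{k-1}=(1-\delta^{k})(\boldsymbol{\eta}^{k}-\bar{\boldsymbol{\eta}}^{k-1})$, so that for every $\delta^{k}\in(0,1)$ the question of whether $\triangle\bar{\boldsymbol{\eta}}^{k}$ vanishes is governed solely by whether the user-side best response $\boldsymbol{\eta}^{k}$ differs from $\bar{\boldsymbol{\eta}}^{k-1}$.

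First I would establish that $\triangle\bar{\boldsymbol{\eta}}^{k}\neq 0$ whenever the iterate is not optimal, since this is precisely what licenses the use of Lemma~\ref{thm:descent_direction}. To this end I would identify the user-side rule (\ref{eq:bs_selection}) as the linear-minimization oracle for $\psi$ at $\bar{\boldsymbol{\eta}}^{k-1}$: because the gradient component is $\partial\psi/\partial\eta_{j}(x)=\lambda(x)\nu(x)\big(\theta^{a}_{j}(k)+r_{j}(x)\theta^{b}_{j}(k)\big)/r_{j}(x)$, assigning each location $x$ entirely to the BS maximizing $r_{j}(x)/(\theta^{a}_{j}(k)+r_{j}(x)\theta^{b}_{j}(k))$ is exactly the choice that minimizes $\langle\nabla\psi(\bar{\boldsymbol{\eta}}^{k-1}),\boldsymbol{\eta}\rangle$ over $\tilde{\mathcal{F}}$. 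Hence $\boldsymbol{\eta}^{k}\in\arg\min_{\boldsymbol{\eta}\in\tilde{\mathcal{F}}}\langle\nabla\psi(\bar{\boldsymbol{\eta}}^{k-1}),\boldsymbol{\eta}\rangle$, and the first-order optimality condition for minimizing a convex function over the convex set $\tilde{\mathcal{F}}$ (Lemmas~\ref{thm:feasible_set} and \ref{thm:convexity}) states that $\bar{\boldsymbol{\eta}}^{k-1}$ is optimal if and only if the gap $\langle\nabla\psi(\bar{\boldsymbol{\eta}}^{k-1}),\bar{\boldsymbol{\eta}}^{k-1}-\boldsymbol{\eta}^{k}\rangle$ vanishes. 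Since the iterate is assumed non-optimal, this gap is strictly positive, which forces $\boldsymbol{\eta}^{k}\neq\bar{\boldsymbol{\eta}}^{k-1}$ and therefore $\triangle\bar{\boldsymbol{\eta}}^{k}\neq 0$ for every admissible $\delta^{k}$.

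With the direction confirmed nonzero, the conclusion follows in two lines. Lemma~\ref{thm:descent_direction} gives $\langle\nabla\psi(\bar{\boldsymbol{\eta}})|_{\bar{\boldsymbol{\eta}}=\bar{\boldsymbol{\eta}}^{k}},\triangle\bar{\boldsymbol{\eta}}^{k}\rangle<0$, and applying the first-order inequality for the convex function $\psi$ between the points $\bar{\boldsymbol{\eta}}^{k-1}$ and $\bar{\boldsymbol{\eta}}^{k}$ yields
\begin{equation}
\psi(\bar{\boldsymbol{\eta}}^{k-1})\geq\psi(\bar{\boldsymbol{\eta}}^{k})+\langle\nabla\psi(\bar{\boldsymbol{\eta}}^{k}),\bar{\boldsymbol{\eta}}^{k-1}-\bar{\boldsymbol{\eta}}^{k}\rangle=\psi(\bar{\boldsymbol{\eta}}^{k})-\langle\nabla\psi(\bar{\boldsymbol{\eta}}^{k}),\triangle\bar{\boldsymbol{\eta}}^{k}\rangle.
\end{equation}
Because the last inner product is negative, the right-hand side strictly exceeds $\psi(\bar{\boldsymbol{\eta}}^{k})$, so $\psi(\bar{\boldsymbol{\eta}}^{k})<\psi(\bar{\boldsymbol{\eta}}^{k-1})$. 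As this argument is valid for every $\delta^{k}\in(0,1)$, a qualifying $\delta^{k}$ certainly exists; if one additionally insists that $\bar{\boldsymbol{\eta}}^{k}$ remain in $\tilde{\mathcal{F}}$, I would take $\delta^{k}$ close enough to $1$ so that the convex combination stays inside the feasible set, which already contains $\bar{\boldsymbol{\eta}}^{k-1}$.

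I expect the main obstacle to be the first step, namely pinning down rigorously that non-optimality forces $\boldsymbol{\eta}^{k}\neq\bar{\boldsymbol{\eta}}^{k-1}$: this rests on recognizing the selection rule as the exact conditional-gradient oracle and on the variational-inequality characterization of optimality, rather than on any computation. A secondary, bookkeeping subtlety is that the advertisements $\theta^{a}_{j}(k),\theta^{b}_{j}(k)$ are formed from the load at $\bar{\boldsymbol{\eta}}^{k-1}$ whereas Lemma~\ref{thm:descent_direction} evaluates the gradient at $\bar{\boldsymbol{\eta}}^{k}$; I would quote Lemma~\ref{thm:descent_direction} in the precise form already stated so that this discrepancy is absorbed, and I would avoid re-deriving the inner product.
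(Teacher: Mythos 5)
Your proof is correct, and it reaches the conclusion by a genuinely tighter route than the paper does. The paper's own proof is two sentences: it writes $\triangle\bar{\boldsymbol{\eta}}^{k}=(1-\delta^{k})(\boldsymbol{\eta}^{k}-\bar{\boldsymbol{\eta}}^{k-1})$, invokes Lemma~\ref{thm:descent_direction} to declare $(\boldsymbol{\eta}^{k}-\bar{\boldsymbol{\eta}}^{k-1})$ a descent direction with $(1-\delta^{k})$ as the step size, and then simply asserts that, since the iterate is not optimal, some $\delta^{k}$ achieves $\psi(\bar{\boldsymbol{\eta}}^{k})<\psi(\bar{\boldsymbol{\eta}}^{k-1})$ --- an implicit ``sufficiently small step along a descent direction'' argument with no supporting computation. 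You do two things the paper does not. First, you verify the non-degeneracy condition $\triangle\bar{\boldsymbol{\eta}}^{k}\neq 0$ that Lemma~\ref{thm:descent_direction} presupposes, by recognizing the selection rule (\ref{eq:bs_selection}) as the linear-minimization oracle for $\nabla\psi$ over $\tilde{\mathcal{F}}$ and using the variational-inequality characterization of optimality (via Lemmas~\ref{thm:feasible_set} and~\ref{thm:convexity}); the paper leaves this entirely unexamined. Second, your closing step is different in kind: rather than a small-step line-search existence claim, you exploit the fact that Lemma~\ref{thm:descent_direction} anchors the gradient at the \emph{new} point $\bar{\boldsymbol{\eta}}^{k}$, so the first-order convexity inequality $\psi(\bar{\boldsymbol{\eta}}^{k-1})\geq\psi(\bar{\boldsymbol{\eta}}^{k})-\langle\nabla\psi(\bar{\boldsymbol{\eta}}^{k}),\triangle\bar{\boldsymbol{\eta}}^{k}\rangle$ forces strict decrease for \emph{every} $\delta^{k}\in(0,1)$, not merely for some; existence then follows trivially, and you have in fact proved a stronger statement than the lemma asserts. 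One small bookkeeping point: your non-degeneracy argument assumes non-optimality of $\bar{\boldsymbol{\eta}}^{k-1}$, whereas the lemma's hypothesis is $\bar{\boldsymbol{\eta}}^{k}\neq\bar{\boldsymbol{\eta}}^{*}$; this is repaired by contraposition, since $\boldsymbol{\eta}^{k}=\bar{\boldsymbol{\eta}}^{k-1}$ would give $\bar{\boldsymbol{\eta}}^{k}=\bar{\boldsymbol{\eta}}^{k-1}$ with zero Frank--Wolfe gap, hence $\bar{\boldsymbol{\eta}}^{k}=\bar{\boldsymbol{\eta}}^{*}$, contradicting the hypothesis. Both your proof and the paper's share the same residual reliance on Lemma~\ref{thm:descent_direction} as stated, including the time-indexing of the advertisements $\theta^{a}_{j}(k),\theta^{b}_{j}(k)$, which you correctly flag and absorb rather than re-derive.
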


\begin{proof}
\label{prf:exit_delta}
Since
\begin{eqnarray}
\label{eq:exit_delta}
\triangle\bar{\boldsymbol{\eta}}^{k}&&=\bar{\boldsymbol{\eta}}^{k}- \bar{\boldsymbol{\eta}}^{k-1}\\
&&=(1-\delta^{k})(\boldsymbol{\eta}^{k}- \bar{\boldsymbol{\eta}}^{k-1}), \nonumber
\end{eqnarray}
$(\boldsymbol{\eta}^{k}- \bar{\boldsymbol{\eta}}^{k-1})$, according to Lemma \ref{thm:descent_direction}, provides the descent direction for searching the optimal value in the iterations while $(1-\delta^{k})$ indicates the search step in the $k$th iteration. Since $\bar{\boldsymbol{\eta}}^{k}\neq\bar{\boldsymbol{\eta}}^{*}$, there exists $0<\delta^{k}<1$ that enables $\psi(\bar{\boldsymbol{\eta}}^{k})< \psi(\bar{\boldsymbol{\eta}}^{k-1})$
\end{proof}

\begin{theorem}
\label{thm:alg_converge}
If the traffic load balancing problem is feasible\footnote{The problem is feasible when the feasible set of the problem is not empty.} and $\delta^{k}$ is properly selected, $\bar{\boldsymbol{\eta}}^{k}=(1-\delta^{k})\boldsymbol{\eta}^{k}+\delta^{k}\bar{\boldsymbol{\eta}}^{k-1}$ converges to $\bar{\boldsymbol{\eta}}^{*}$.
\end{theorem}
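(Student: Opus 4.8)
The plan is to recognize the iteration $\bar{\boldsymbol{\eta}}^{k}=(1-\delta^{k})\boldsymbol{\eta}^{k}+\delta^{k}\bar{\boldsymbol{\eta}}^{k-1}$ as a conditional-gradient (Frank--Wolfe) descent on the convex program $\min_{\boldsymbol{\eta}\in\tilde{\mathcal{F}}}\psi(\boldsymbol{\eta})$, and to drive the argument through monotone convergence of the objective values together with compactness of $\tilde{\mathcal{F}}$. Two facts are immediate. First, $\tilde{\mathcal{F}}$ is convex by Lemma~\ref{thm:feasible_set} and is closed and bounded (each $\eta_{j}(x)\in[0,1]$), hence compact; since $\psi$ is continuous on $\tilde{\mathcal{F}}$ it attains its minimum at $\bar{\boldsymbol{\eta}}^{*}$ and is bounded below. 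Second, by Lemma~\ref{thm:exit_delta}, as long as $\bar{\boldsymbol{\eta}}^{k}\neq\bar{\boldsymbol{\eta}}^{*}$ the step size $\delta^{k}$ can be chosen in $(0,1)$ so that $\psi(\bar{\boldsymbol{\eta}}^{k})<\psi(\bar{\boldsymbol{\eta}}^{k-1})$. Taking such a choice at every iteration makes $\{\psi(\bar{\boldsymbol{\eta}}^{k})\}$ a strictly decreasing sequence bounded below, so it converges to a limit $\psi^{\infty}\geq\psi(\bar{\boldsymbol{\eta}}^{*})$.

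It then remains to show $\psi^{\infty}=\psi(\bar{\boldsymbol{\eta}}^{*})$, which is the crux. By compactness, the iterate sequence $\{\bar{\boldsymbol{\eta}}^{k}\}$ admits a subsequence converging to some $\bar{\boldsymbol{\eta}}^{\infty}\in\tilde{\mathcal{F}}$, and continuity of $\psi$ gives $\psi(\bar{\boldsymbol{\eta}}^{\infty})=\psi^{\infty}$. I would argue by contradiction: suppose $\bar{\boldsymbol{\eta}}^{\infty}$ is not a minimizer. Then the vertex $\boldsymbol{v}^{\infty}$ selected by the rule of Eq.~(\ref{eq:bs_selection}) at $\bar{\boldsymbol{\eta}}^{\infty}$ satisfies $\langle\bigtriangledown\psi(\bar{\boldsymbol{\eta}}^{\infty}),\boldsymbol{v}^{\infty}-\bar{\boldsymbol{\eta}}^{\infty}\rangle<0$ exactly as in Lemma~\ref{thm:descent_direction}, so a strict decrease of $\psi$ is available at $\bar{\boldsymbol{\eta}}^{\infty}$. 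Propagating this decrease to the iterates of the subsequence lying sufficiently close to $\bar{\boldsymbol{\eta}}^{\infty}$ forces $\psi(\bar{\boldsymbol{\eta}}^{k+1})$ strictly below $\psi^{\infty}$ for some such $k$, contradicting $\psi^{\infty}=\inf_{k}\psi(\bar{\boldsymbol{\eta}}^{k})$. Hence $\bar{\boldsymbol{\eta}}^{\infty}$ minimizes $\psi$ and $\psi^{\infty}=\psi(\bar{\boldsymbol{\eta}}^{*})$.

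Finally, I would upgrade convergence of the objective values to convergence of the iterates. Lemma~\ref{thm:convexity} establishes $\bigtriangledown^{2}\psi(\boldsymbol{\eta})>0$ on $\tilde{\mathcal{F}}$, so $\psi$ is strictly convex and its minimizer $\bar{\boldsymbol{\eta}}^{*}$ is unique. Consequently every convergent subsequence of $\{\bar{\boldsymbol{\eta}}^{k}\}$ has the same limit $\bar{\boldsymbol{\eta}}^{*}$; since the sequence lies in the compact set $\tilde{\mathcal{F}}$ and all its subsequential limits coincide, the whole sequence converges to $\bar{\boldsymbol{\eta}}^{*}$.

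I expect the main obstacle to be the propagation step in the middle paragraph: without an explicitly assumed Lipschitz-gradient bound, turning the pointwise descent guarantees of Lemmas~\ref{thm:descent_direction}--\ref{thm:exit_delta} into a \emph{uniform} per-iteration decrease near $\bar{\boldsymbol{\eta}}^{\infty}$ is delicate, since the vertex-selection map of Eq.~(\ref{eq:bs_selection}) may be discontinuous at ties. The rigorous fix is to either invoke exact line search for $\delta^{k}$, so that $\psi(\bar{\boldsymbol{\eta}}^{k})=\min_{0<\delta<1}\psi((1-\delta)\boldsymbol{\eta}^{k}+\delta\bar{\boldsymbol{\eta}}^{k-1})$, or to work with the Frank--Wolfe gap $g(\boldsymbol{\eta})=\max_{\boldsymbol{v}\in\tilde{\mathcal{F}}}\langle\bigtriangledown\psi(\boldsymbol{\eta}),\boldsymbol{\eta}-\boldsymbol{v}\rangle$, show $g(\bar{\boldsymbol{\eta}}^{k})\to 0$ along the subsequence, and use convexity (Lemma~\ref{thm:convexity}) to bound the optimality gap $\psi(\bar{\boldsymbol{\eta}}^{k})-\psi(\bar{\boldsymbol{\eta}}^{*})\leq g(\bar{\boldsymbol{\eta}}^{k})$, which closes the argument.
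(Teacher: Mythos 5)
Your proposal is correct and takes essentially the same route as the paper's own proof: the descent direction of Lemma~\ref{thm:descent_direction} combined with the step-size existence of Lemma~\ref{thm:exit_delta} yields a strictly decreasing objective sequence bounded below by zero, and the limit is identified with $\bar{\boldsymbol{\eta}}^{*}$ by the same contradiction (``otherwise $\psi$ can be further reduced''). The paper's proof is simply a much terser rendering of this argument --- it passes directly from boundedness of $\psi(\bar{\boldsymbol{\eta}})$ to convergence of the iterates without the compactness/subsequence step, the strict-convexity/uniqueness step, or the uniform-decrease issue you flag, so your additional machinery (exact line search or the Frank--Wolfe gap bound) rigorously fills in details that the paper leaves implicit.
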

\begin{proof}
\label{prf:alg_converge}
Since 1) $\bar{\boldsymbol{\eta}}^{k}_{j}-\bar{\boldsymbol{\eta}}^{k-1}_{j}$ is a descent direction of $\psi(\bar{\boldsymbol{\eta}})$ at $\bar{\boldsymbol{\eta}}^{k}$ and 2) $\delta^{k}$ is properly selected such that $\psi(\bar{\boldsymbol{\eta}}^{k})< \psi(\bar{\boldsymbol{\eta}}^{k-1})$, the mapping, $\bar{\boldsymbol{\eta}}^{k}=(1-\delta^{k})\boldsymbol{\eta}^{k}+\delta^{k}\bar{\boldsymbol{\eta}}^{k-1}$, keep decreasing $\psi(\bar{\boldsymbol{\eta}})$. Since $\psi(\bar{\boldsymbol{\eta}})\geq 0$, $\bar{\boldsymbol{\eta}}^{k}$ will eventually converge. According to Lemma \ref{thm:exit_delta}, $\bar{\boldsymbol{\eta}}^{k}$ converges to $\bar{\boldsymbol{\eta}}^{*}$. Otherwise, $\psi(\bar{\boldsymbol{\eta}})$ can be further reduced.
\end{proof}

\begin{corollary}
\label{thm:ad_converge}
Any BS's network utility advertisements, $\theta^{a}_{j}(k)$ and $\theta^{a}_{j}(k)$, $j\in\mathcal{B}$, are stabilized.
\end{corollary}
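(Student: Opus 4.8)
The plan is to derive the corollary directly from Theorem~\ref{thm:alg_converge} by a continuity argument, following the chain of dependencies noted just before Lemma~\ref{thm:feasible_set}: a BS's advertisements depend on its intermediate traffic loads, which in turn depend only on the intermediate user association. Since Theorem~\ref{thm:alg_converge} already guarantees that $\bar{\boldsymbol{\eta}}^{k}$ converges to $\bar{\boldsymbol{\eta}}^{*}$, it suffices to show that each link in this chain is a continuous map, so that convergence propagates all the way to $\theta^{a}_{j}(k)$ and $\theta^{b}_{j}(k)$.

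First I would observe that, by Eqs.~(\ref{eq:backhaul_load_ad}) and (\ref{eq:bs_load_ad}), the intermediate loads $\rho_{j}(k+1)$ and $\tilde{\rho}_{j}(k+1)$ are integrals that are linear, and hence continuous, functionals of the intermediate association $\bar{\boldsymbol{\eta}}^{k}_{j}$. Therefore the convergence $\bar{\boldsymbol{\eta}}^{k}\to\bar{\boldsymbol{\eta}}^{*}$ furnished by Theorem~\ref{thm:alg_converge} forces $\rho_{j}(k)$ and $\tilde{\rho}_{j}(k)$ to converge to the loads $\rho_{j}^{*}$ and $\tilde{\rho}_{j}^{*}$ induced by $\bar{\boldsymbol{\eta}}^{*}$.

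Next I would invoke the defining expressions for the advertisements, Eqs.~(\ref{eq:bs_msg_1}) and (\ref{eq:bs_msg_2}). Both $\theta^{a}_{j}$ and $\theta^{b}_{j}$ are compositions of exponential, rational, and linear functions of $\rho_{j}$ and $\tilde{\rho}_{j}$, with the constants $\hat{\rho}_{j}$, $\alpha_{j}$, $R_{j}$, and $\kappa$ held fixed by the standing assumptions. The only place where continuity could fail is at the factors $1/(1-\rho_{j})$ and $1/(1-\tilde{\rho}_{j})$; but the feasible set $\tilde{\mathcal{F}}$ enforces $\rho_{j}\leq 1-\epsilon$ and $\tilde{\rho}_{j}\leq 1-\epsilon$, so both denominators stay bounded below by $\epsilon>0$ throughout the iteration. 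Hence $\theta^{a}_{j}$ and $\theta^{b}_{j}$ are continuous (indeed smooth) in $\rho_{j}$ and $\tilde{\rho}_{j}$ over the admissible load region, and the convergence of $\rho_{j}(k)$ and $\tilde{\rho}_{j}(k)$ therefore carries over to $\theta^{a}_{j}(k)$ and $\theta^{b}_{j}(k)$.

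I do not expect a genuine obstacle here, since the statement is an immediate corollary of Theorem~\ref{thm:alg_converge}; the only point requiring care is the one just noted, namely confirming that the singularities at $\rho_{j}=1$ and $\tilde{\rho}_{j}=1$ are excluded by the $1-\epsilon$ bounds, so that the composition argument is valid and the advertisements inherit the convergence of the loads. Once that is recorded, the stabilization of every BS's advertisements $\theta^{a}_{j}$ and $\theta^{b}_{j}$ follows at once, completing the proof.
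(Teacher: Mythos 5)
Your proposal is correct and follows essentially the same route as the paper's own proof: Theorem~\ref{thm:alg_converge} gives convergence of the intermediate user association, which determines the intermediate loads via Eqs.~(\ref{eq:backhaul_load_ad}) and (\ref{eq:bs_load_ad}), which in turn determine the advertisements via Eqs.~(\ref{eq:bs_msg_1}) and (\ref{eq:bs_msg_2}). The only difference is that you make explicit the continuity justifications (linearity of the load integrals and smoothness of the advertisement formulas on the region $\rho_{j},\tilde{\rho}_{j}\leq 1-\epsilon$) that the paper's two-line proof leaves implicit, which is a strict improvement in rigor but not a different argument.
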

\begin{proof}
\label{prf:ad_converge}
$\theta^{a}_{j}(k)$ and $\theta^{a}_{j}(k)$, $j\in\mathcal{B}$, are calculated by the traffic load in BS $j$ and its backhaul, respectively. When the intermediate user association converges, the traffic loads are determined. As a result, individual BS's network utility advertisements are stabilized.
\end{proof}


\begin{theorem}
\label{thm:alg_opt}
Given that the traffic load balancing problem is feasible, the user association, $\boldsymbol{\eta}^{*}_{j}=\{\eta^{*}_{j}(x)|\eta^{*}_{j}(x)=\{0,1\},\;x\in\mathcal{A}\}$, $j\in\mathcal{B}$, based on the stabilized network utility advertisements is an optimal solution to the traffic load balancing problem.
\end{theorem}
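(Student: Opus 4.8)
The plan is to bridge the relaxed convex problem and the original combinatorial problem by exhibiting a single binary association that is simultaneously optimal for both. First I would invoke Lemmas \ref{thm:feasible_set} and \ref{thm:convexity} together with Theorem \ref{thm:alg_converge}: because $\psi(\boldsymbol{\eta})$ is convex over the convex set $\tilde{\mathcal{F}}$, every stationary point is a global minimizer, so the limit $\bar{\boldsymbol{\eta}}^{*}$ to which the iteration converges is the global optimum of the relaxed problem $\min_{\boldsymbol{\eta}\in\tilde{\mathcal{F}}}\psi(\boldsymbol{\eta})$.

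The key step is to show that this relaxed optimum is in fact binary. At convergence the mapping $\bar{\boldsymbol{\eta}}^{k}=(1-\delta^{k})\boldsymbol{\eta}^{k}+\delta^{k}\bar{\boldsymbol{\eta}}^{k-1}$ reaches a fixed point, i.e.\ $\bar{\boldsymbol{\eta}}^{*}=(1-\delta^{k})\boldsymbol{\eta}^{*}+\delta^{k}\bar{\boldsymbol{\eta}}^{*}$. Since $0<\delta^{k}<1$, this identity forces $\boldsymbol{\eta}^{*}=\bar{\boldsymbol{\eta}}^{*}$. But $\boldsymbol{\eta}^{*}$ is produced by the user side selection rule in Eqs.\ (\ref{eq:bs_selection}) and (\ref{eq:eta_ua_update}), which assigns every user at location $x$ to exactly one BS; hence $\eta^{*}_{j}(x)\in\{0,1\}$ and $\bar{\boldsymbol{\eta}}^{*}$ is a binary association. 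Because $\bar{\boldsymbol{\eta}}^{*}\in\tilde{\mathcal{F}}$ it also satisfies the load constraints $0\le\rho_{j}\le 1-\epsilon$, $0\le\tilde{\rho}_{j}\le 1-\epsilon$ and $\sum_{j}\eta^{*}_{j}(x)=1$, so in fact $\bar{\boldsymbol{\eta}}^{*}\in\mathcal{F}$.

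Finally I would close the argument with a relaxation-bound comparison. Since $\mathcal{F}\subseteq\tilde{\mathcal{F}}$, minimizing over the larger set can only lower the optimal value, so $\min_{\boldsymbol{\eta}\in\tilde{\mathcal{F}}}\psi(\boldsymbol{\eta})\le\min_{\boldsymbol{\eta}\in\mathcal{F}}\psi(\boldsymbol{\eta})$. The point $\boldsymbol{\eta}^{*}=\bar{\boldsymbol{\eta}}^{*}$ attains the left-hand minimum yet lies in $\mathcal{F}$; therefore $\psi(\boldsymbol{\eta}^{*})=\min_{\boldsymbol{\eta}\in\mathcal{F}}\psi(\boldsymbol{\eta})$, which is exactly the statement that $\boldsymbol{\eta}^{*}$ solves the original traffic load balancing problem in Eq.\ (\ref{eq:object_network_goal}).

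I expect the main obstacle to be the middle step --- proving that the fractional relaxation admits an integral optimum that the algorithm actually reaches. The convexity and convergence results only guarantee a possibly fractional minimizer of the relaxed problem; the entire payoff comes from the fixed-point identity $\boldsymbol{\eta}^{*}=\bar{\boldsymbol{\eta}}^{*}$ forcing that minimizer to coincide with the binary output of the $\arg\max$ selection rule. One subtlety I would handle carefully is ties in Eq.\ (\ref{eq:bs_selection}): if the maximizer is non-unique the binary selection remains well defined by breaking ties arbitrarily, and I would note that such tied locations form a measure-zero subset of $\mathcal{A}$, so they affect neither the integral loads $\rho_{j}$ and $\tilde{\rho}_{j}$ nor the value of $\psi$.
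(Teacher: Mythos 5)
Your proof is correct in substance, but it reaches the conclusion by a genuinely different route than the paper. The paper's proof never invokes Theorem \ref{thm:alg_converge} or the integrality of the relaxed optimum: it is a direct first-order certificate. Taking the stabilized advertisements $\theta^{a}_{j}(*)$, $\theta^{b}_{j}(*)$, it computes
\[
\langle\bigtriangledown{\psi(\boldsymbol{\eta})}|_{\boldsymbol{\eta}=\boldsymbol{\eta}^{*}},\boldsymbol{\eta}-\boldsymbol{\eta}^{*}\rangle
=\int_{x\in\mathcal{A}}\lambda(x)\nu(x)\sum_{j\in\mathcal{B}}\bigl(\eta_{j}(x)-\eta^{*}_{j}(x)\bigr)\Bigl(\frac{\theta^{a}_{j}(*)}{r_{j}(x)}+\theta^{b}_{j}(*)\Bigr)dx
\]
and observes that the selection rule in Eq.~(\ref{eq:bs_selection}) makes $\eta^{*}_{j}(x)$ the per-location minimizer of the linear form $\sum_{j}\eta_{j}(x)\bigl(\theta^{a}_{j}(*)/r_{j}(x)+\theta^{b}_{j}(*)\bigr)$, so the inner product is nonnegative for every competitor $\boldsymbol{\eta}$; convexity of $\psi$ (Lemma \ref{thm:convexity}) then upgrades this variational inequality to global optimality. (Strictly, the paper states the inequality only for $\boldsymbol{\eta}\in\mathcal{F}$, which is not convex, but the same per-location bound holds verbatim for fractional $\eta_{j}(x)$, so the convexity step goes through over $\tilde{\mathcal{F}}$.) You instead argue via tightness of the relaxation: Theorem \ref{thm:alg_converge} places the limit at the relaxed optimum, the fixed-point identity $\bar{\boldsymbol{\eta}}^{*}=(1-\delta^{k})\boldsymbol{\eta}^{*}+\delta^{k}\bar{\boldsymbol{\eta}}^{*}$ with $0<\delta^{k}<1$ forces that optimum to coincide with the binary output of the selection rule, and the bound $\min_{\tilde{\mathcal{F}}}\psi\le\min_{\mathcal{F}}\psi$ transfers optimality back to the original problem. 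The trade-off is this: the paper's certificate is self-contained --- it certifies optimality of whatever association the stabilized advertisements induce, independently of how the iteration got there, and so does not inherit the vagueness of the ``properly selected $\delta^{k}$'' hypothesis; your route leans on that theorem (and implicitly on $1-\delta^{k}$ staying bounded away from zero so that $\boldsymbol{\eta}^{k}\to\bar{\boldsymbol{\eta}}^{*}$), but it buys a stronger structural fact the paper never states explicitly, namely that the fractional relaxation admits an integral optimizer, i.e., relaxing $\eta_{j}(x)\in\{0,1\}$ to $[0,1]$ costs nothing. Your tie-breaking remark is also more careful than the paper, which silently assumes the $\arg\max$ is unique; note, though, that the measure-zero claim is a genericity assumption rather than something you have proved, and the paper's own argument needs it just as much.
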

\begin{proof}
\label{prf:alg_opt}
Denote $\theta^{a}_{j}(*)$ and $\theta^{b}_{j}(*)$ as BS $j$'s stabilized network utility advertisements. Let $\boldsymbol{\eta}^{*}=\{\boldsymbol{\eta}^{*}_{j}|j\in\mathcal{B}\}$ and $\boldsymbol{\eta}=\{\boldsymbol{\eta}_{j}|j\in\mathcal{B}\}$.  Here, $\boldsymbol{\eta}_{j}=\{\eta_{j}(x)|\eta_{j}(x)=\{0,1\},\;x\in\mathcal{A}\}$. Suppose $\boldsymbol{\eta}$ to be arbitrary user association in the feasible set $\mathcal{F}$ that is not equal to $\boldsymbol{\eta}^{*}$.

\begin{eqnarray}
\label{eq:prf_opt}
&&\langle\bigtriangledown{\psi(\boldsymbol{\eta})}|_{\boldsymbol{\eta} =\boldsymbol{\eta}^{*}},\boldsymbol{\eta}-\boldsymbol{\eta}^{*}\rangle\\
&&=\int_{x \in\mathcal{A}} \sum_{j\in\mathcal{B}}\lambda(x)\nu(x)(\eta_{j}(x)-\eta^{*}_{j}(x))(\frac{\theta^{a}_{j}(*)}{r_{j}(x)}+\theta^{b}_{j}(*))\nonumber\\
&&=\int_{x \in\mathcal{A}} \lambda(x)\nu(x)\sum_{j\in\mathcal{B}}(\eta_{j}(x)-\eta^{*}_{j}(x))(\frac{\theta^{a}_{j}(*)}{r_{j}(x)}+\theta^{b}_{j}(*))\nonumber
\end{eqnarray}
Since
\begin{equation}
\label{eq:bs_select_prf}
b^{*}(x)= \arg\max_{j \in \mathcal{B}}\frac{r_{j}(x)}{\theta^{a}_{j}(*)+r_{j}(x)\theta^{b}_{j}(*)}
\end{equation}
and
\begin{equation}
\label{eq:eta_m_opt}
\setlength{\nulldelimiterspace}{0pt}
\eta^{*}_{j}(x)=\left\{\begin{IEEEeqnarraybox}[\relax][c]{ls}
1,
\; &for $j=b^{*}(x)$  \\
0, \; &for $j\neq b^{*}(x)$,%
\end{IEEEeqnarraybox}\right.
\end{equation}
\begin{equation}
\label{eq:opt_comp}
\sum_{j\in\mathcal{B}}\eta_{j}(x)(\frac{\theta^{a}_{j}(*)}{r_{j}(x)}+\theta^{b}_{j}(*))
\geq \sum_{j\in\mathcal{B}}\eta^{*}_{j}(x)(\frac{\theta^{a}_{j}(*)}{r_{j}(x)}+\theta^{b}_{j}(*)).
\end{equation}
Hence, $\langle\bigtriangledown{\psi(\boldsymbol{\eta})}|_{\boldsymbol{\eta} =\boldsymbol{\eta}^{*}},\boldsymbol{\eta}-\boldsymbol{\eta}^{*}\rangle\geq 0$. Therefore, $\boldsymbol{\eta}^{*}$ is an optimal solution to the UA problem.
\end{proof}
\subsection{The adaptation of the energy-latency tradeoff}
The system parameter, $\kappa$, controls the tradeoff between the green power utilization and the traffic delivery latency. When $\kappa=0$, $w_{j}(\rho_{j})=1$. In this case, the green power utilization is not modeled in the objective function. Thus, the NUA traffic load balancing scheme determines the user association based only on the traffic delivery latency. As $\kappa$ increases, the awareness of green power utilization in determining the user association enhances. In other words, with a larger $\kappa$, the green power utilization plays a more important role in determining the user association. If $\kappa$ is large enough, the NUA traffic load balancing scheme achieves a user association that approximates the user association that only cares about the green power utilization.
\section{Simulation and Performance Evaluation}
\label{sec:simulation}
\subsection{Simulation setup}
\begin{table}[ht]
\caption{Channel Model and Parameters}
\centering
\begin{tabular}{l||l}
\hline
Parameters & Value\\
\hline
$PL_{MBS}$ (dB) & $PL_{MBS}=128.1+37.6\log_{10}(d)$\\
$PL_{SCBS}$ (dB) & $PL_{SCBS}=38+10\log_{10}(d)$ \\
Rayleigh fading & 9 $dB$\\
Shadowing fading & 5 $dB$ \\
Antenna gain & 15 $dB$\\
Noise power level & -174 $dBm$ \\
Receiver sensitivity & -123 $dBm$ \\
\hline
\end{tabular}
\label{table:sim_parameters}
\end{table}

We set up system level simulations to investigate the performance of the NUA traffic load balancing scheme for the downlink traffic load balancing in backhaul constrained SCNs. In the simulation, three MBSs and seven SCBSs are randomly deployed in a $2000 m \times 2000 m$ area. The total bandwidth is 10~$MHz$ and the frequency reuse factor is one. The channel propagation model is based on COST 231 Walfisch-Ikegami~\cite{COST231}. The channel model and parameters are summarized in Table~\ref{table:sim_parameters}. Here, $PL_{MBS}$ and $PL_{SCBS}$ are the path loss between the users and MBSs and SCBSs, respectively. $d$ is the distance between users and BSs. The transmit power of an MBS and an SCBS are 43~$dBm$ and 33~$dBm$, respectively.

\begin{table}[ht]
\caption{The Average Cache Hit Ratio}
\centering
\begin{tabular}{|c|c|c|c|c|c|c|c|c|c|}
\hline
MBS 1 & MBS 2 & MBS 3 & SCBS 4 & SCBS 5 & SCBS 6 &SCBS 7 &SCBS 8&SCBS 9&SCBS 10 \\
\hline
0.27&0.12&0.28&0.12&0.17&0.22&0.22&0.24&0.24&0.19\\
\hline
\end{tabular}
\label{table:hit_ratio}
\end{table}

The static power consumptions of an MBS and an SCBS are 750~$W$ and 37~$W$, respectively~\cite{Auer:2011:HME}. The load-power coefficients of the MBS and the SCBS are 500 and 4, respectively~\cite{Auer:2011:HME}. The solar cell power efficiency is $17.4\%$~\cite{HIT:Photo}. We assume that the weather condition is the standard condition which specifies a temperature of 25~$^{o}C$, an irradiance of 1000~$W/m^{2}$, and an air mass of 1.5 spectrum \cite{Riordan:1990:WAA}. Thus, the green power generation rate is 174~$W/m^{2}$. The solar panel sizes are randomly selected but ensure the green power generation capacity of MBSs from 750~$w$ to 1300~$w$ while that of SCBSs from 37~$w$ to 48~$w$. BSs' energy-latency coefficients are set to be the same. In the simulation, the average data rate of SCBSs' backhaul is $5$~Mbps. The hit ratio\footnote{The cache hit ratio is randomly selected from 0.1 to 0.3. For the analytical simplicity, we fix the hit ratio of BSs in the simulations.} of BSs' cache system is shown in Table~\ref{table:hit_ratio}.

\subsection{Traffic load balancing algorithms and network utility awareness}
\begin{table}[ht]
\caption{Network Utility Aware User Association Schemes}
\centering
\begin{tabular}{|l|c|c|c|c|}
\hline
UA Scheme & green power & BS Latency & Backhaul Latency & Cache \\
\hline
\hline
vGALA ($\kappa=0$)&&x&&\\
\hline
vGALA ($\kappa=6,\theta=1$)&x&&&\\
\hline
vGALA ($\kappa=4,\theta=0.5$)&x&x&&\\
\hline
NUA ($\kappa=0$, $\alpha_{j}=0$)&&x&x&\\
\hline
NUA ($\kappa=0$, $\alpha_{j}>0$)&&x&x&x\\
\hline
NUA ($\kappa=2$, $\alpha_{j}=0$)&x&x&x&\\
\hline
NUA ($\kappa=2$, $\alpha_{j}>0$)&x&x&x&x\\
\hline
DRB-NU&x&x&x&x\\
\hline
\end{tabular}
\label{table:alg_compare}
\end{table}
In the simulations, we investigate the performance of traffic load balancing schemes with different levels of network utility awareness. The network utilities considered in this paper are green power, the traffic delivery latency in BSs, the traffic delivery latency in backhaul, and the cache hit ratio. We implement three traffic load balancing schemes in the simulations. The first scheme is vGALA \cite{Han:2014:vGALA}. Adapting the parameters of vGALA ($\kappa$ and $\theta$), we realize the three traffic load balancing schemes: 1) BS latency aware, 2) green power aware, and 3) BS latency and green power aware. The second scheme is the NUA traffic load balancing scheme that simulates four traffic load balancing schemes: 1) BS latency and backhaul latency aware, 2) BS latency, backhaul latency and cache hit ratio aware, 3) BS latency, backhaul latency and green power aware, and 4) all network utilities aware. For the NUA scheme, when $\alpha_{j}=0,\;\forall j\in\mathcal{B}$, a BS estimates the traffic delivery latency in backhaul purely based on the traffic arrival rates in the BS. In fact, if the cache system is considered, the traffic load in backhaul should be less than that in the BS. Therefore, the NUA scheme with $\alpha_{j}=0,\;\forall j\in\mathcal{B}$, simulates the cache unaware traffic load balancing scheme.
The third scheme is the data rate bias (DRB) scheme \cite{Damn:2011:S3GPP}. In the implementation, we assume that BSs in the same tier have the data rate bias. MBSs are in the first tier while SCBSs are in the second tier. In the data rate bias scheme, a user selects the serving BS to maximize the biased data rate. The data rate bias of an MBS is set to one. We vary the data rate bias of an SCBS to investigate the performance of the scheme. We set $\psi(\boldsymbol{\eta})$ as the performance metric for selecting the optimal data rate bias. Thus, the implemented DRB scheme is aware of all network utilities and is referred to as DRB-NU (the data rate bias with network utility awareness).
The network utility awareness of the schemes with different settings are shown in Table \ref{table:alg_compare}.

\subsection{Simulation results}
\begin{figure*}
\centering
\hspace*{\fill}
       \begin{subfigure}[b]{0.3\textwidth}
      	    \includegraphics[scale=0.2]{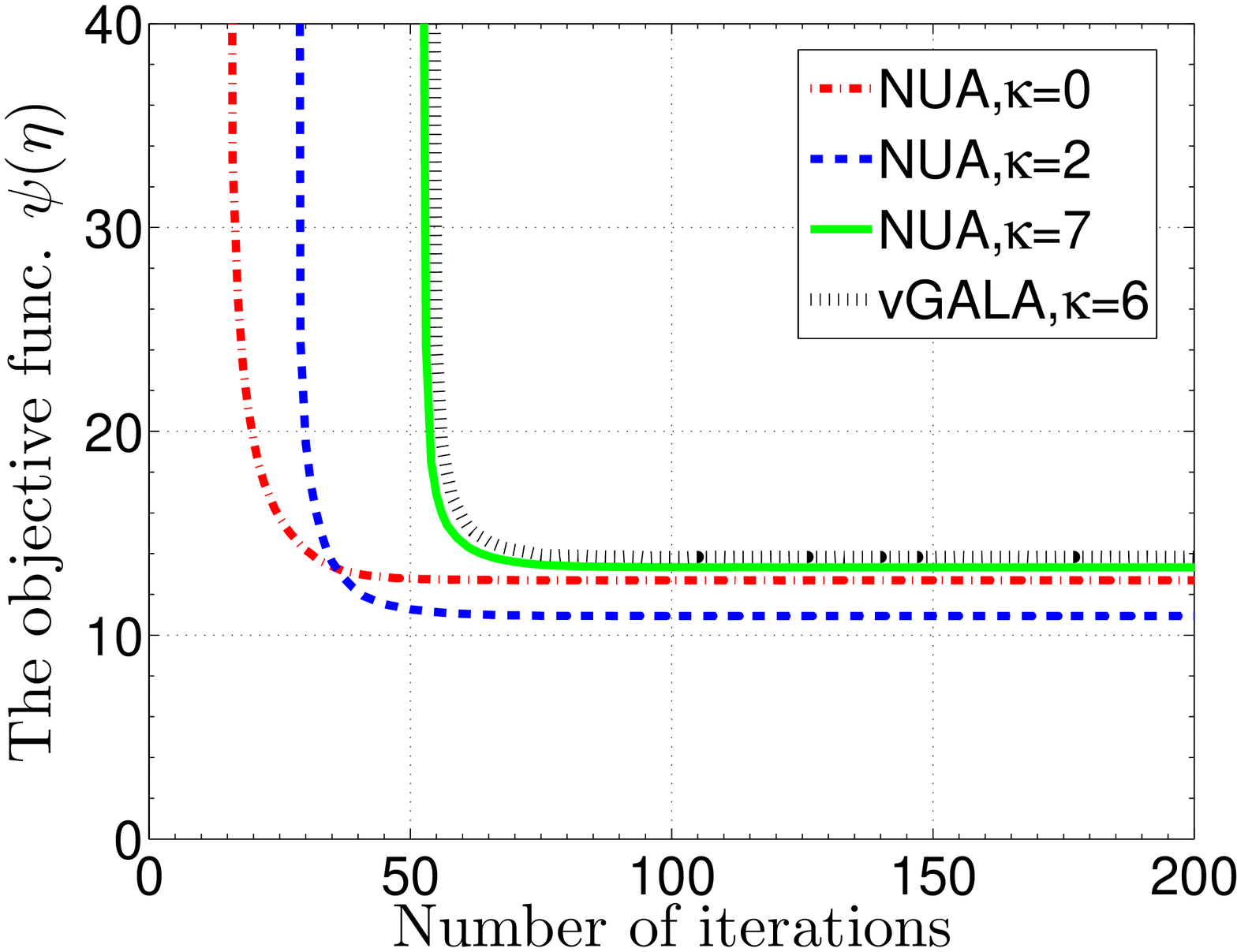}
            \caption{The value of $\psi(\boldsymbol{\eta})$.}
            \label{fig:sim_1_obj}
       \end{subfigure}\hfill
        \begin{subfigure}[b]{0.3\textwidth}
      	    \includegraphics[scale=0.2]{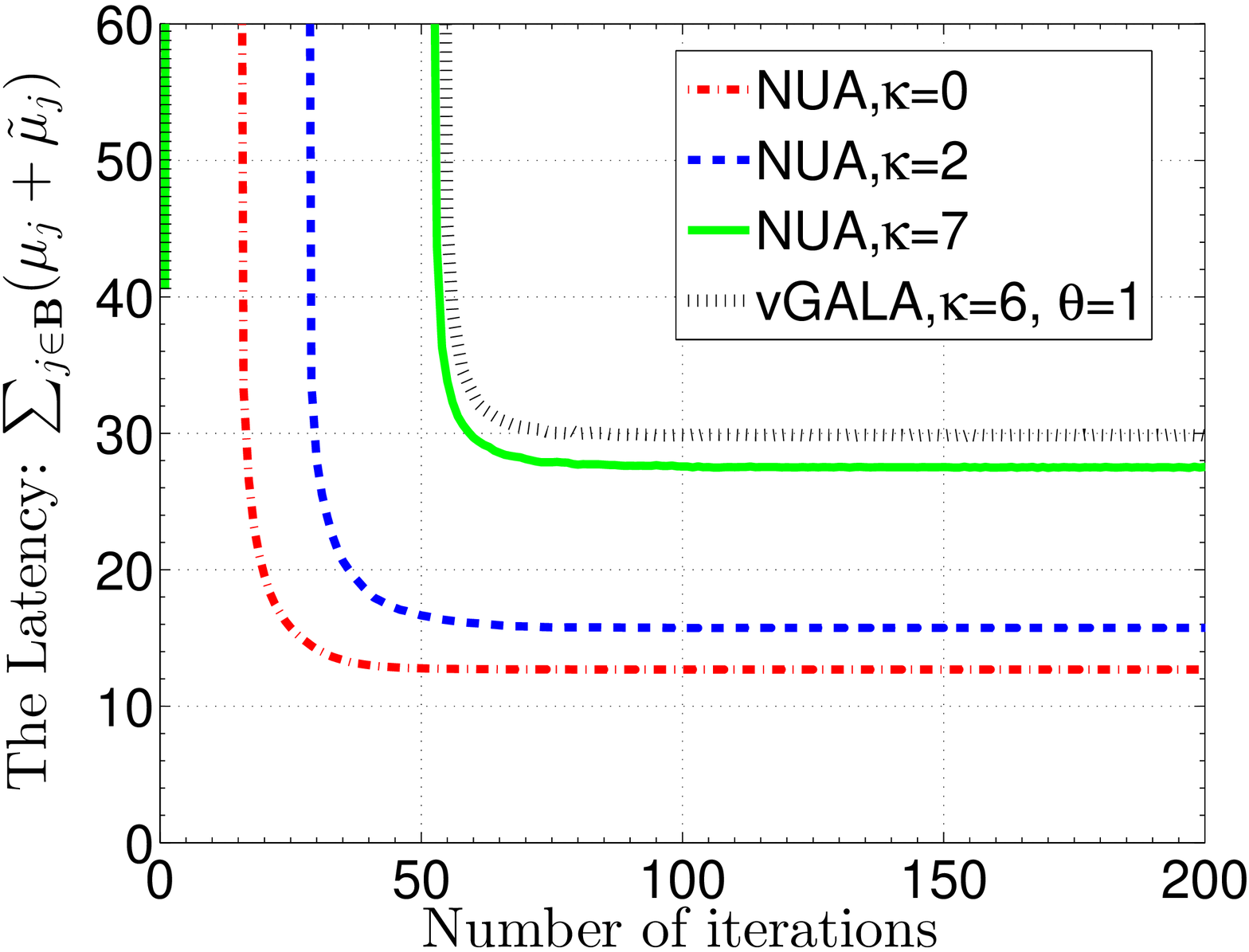}
            \caption{The traffic delivery latency.}
            \label{fig:sim_1_latency}
       \end{subfigure}\hfill
      \begin{subfigure}[b]{0.3\textwidth}
      	    \includegraphics[scale=0.2]{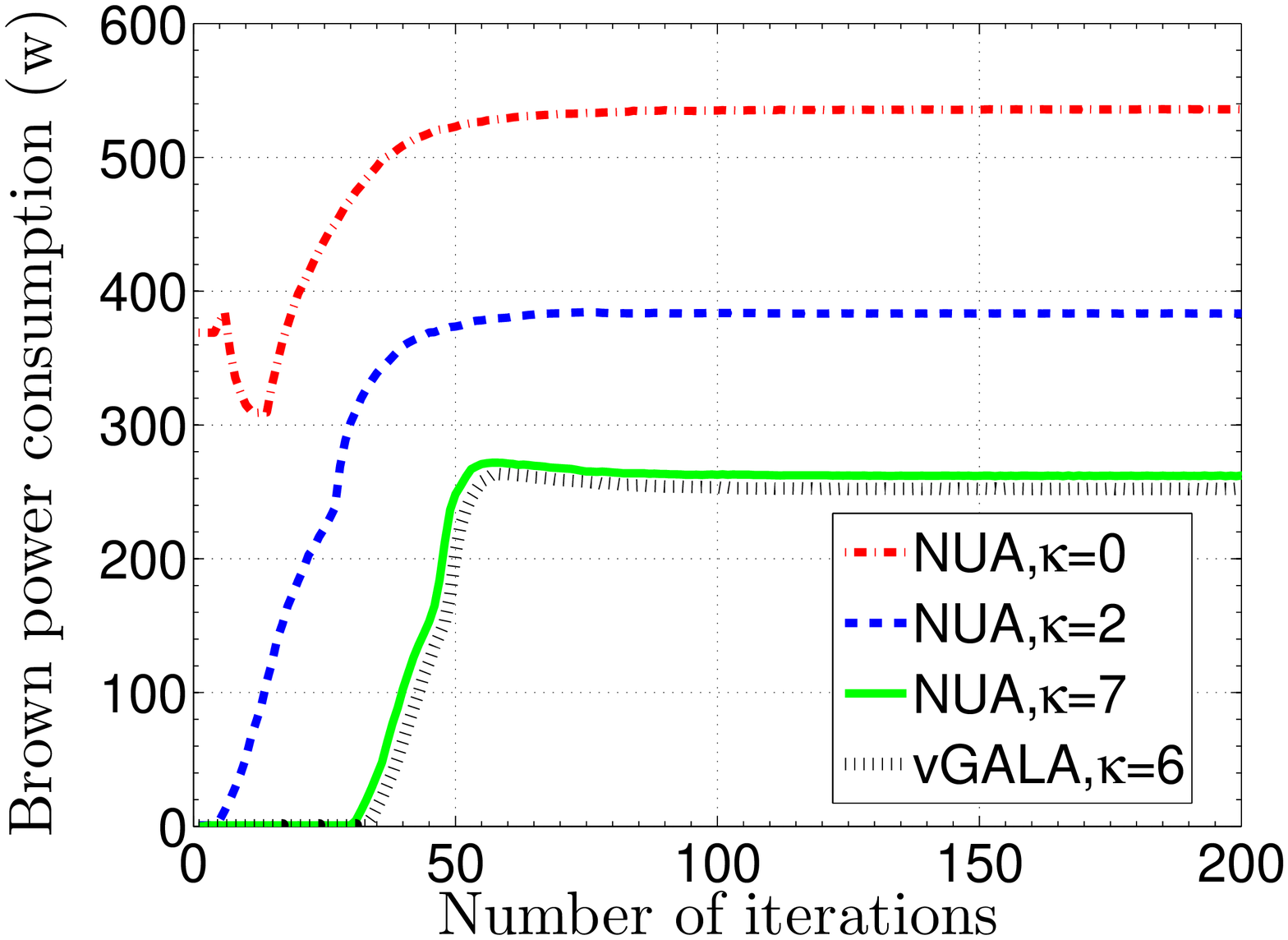}
            \caption{The brown power consumption.}
            \label{fig:sim_1_brown_energy}
       \end{subfigure}\hfill
    \caption{%
       The performance of the NUA scheme with different $\kappa$.
     }%
   \label{fig:sim_1_converge}
\end{figure*}
Fig.~\ref{fig:sim_1_converge} shows the convergence of the NUA scheme and its energy-latency tradeoff with different $\kappa$. Fig.~\ref{fig:sim_1_obj} shows the the value of $\psi(\boldsymbol{\eta})$ converges with less than 100 iterations, and so do the traffic delivery latency (Fig.~\ref{fig:sim_1_latency}) and the brown power consumption (Fig.~\ref{fig:sim_1_brown_energy}). Figs.~\ref{fig:sim_1_latency}~and~\ref{fig:sim_1_brown_energy} show the energy-latency tradeoff. As $\kappa$ increases, the network emphasizes the green power utilization in determining the user association. As a result, with a large $\kappa$, e.g., $\kappa=7$, the network consumes less brown power at the cost of introducing additional traffic delivery latency. The vGALA with a large $\kappa$ and $\theta$, e.g., $\kappa=6$ and $\theta=1$,  realizes the user association that is only aware of green power utilization \cite{Han:2014:vGALA}. Fig.~\ref{fig:sim_1_brown_energy} shows that as $\kappa$ increases, the performance of the NUA scheme in terms of the brown power consumption approaches that of the traffic load balancing scheme that optimizes the green power utilization (only aware of green power utilization).

Fig.~\ref{fig:sim_3_solar_rate} shows the performance of the NUA scheme versus different solar panel efficiency. Fig.~\ref{fig:sim_3_brown_power} shows that the brown power consumption reduces as the solar panel efficiency increases. This is because a higher solar panel efficiency enables solar panels to generate a larger amount of electricity and thus lessen the brown power consumption. As shown in Fig~\ref{fig:sim_3_latency}, the performance of the traffic delivery latency divides into four regions. In first region (R1), the traffic delivery latency does not change. This is because the green power generated in individual BSs is less than their static power consumption when the solar panel efficiency is in R1. In other words, the green capacity of all BSs is zero when the solar panel efficiency is within R1. As a result, increasing the solar panel efficiency in R1 does not impact the traffic delivery latency, and neither does the value of $\psi(\boldsymbol{\eta})$. In the second region, as shown in Fig.~\ref{fig:sim_3_latency}, the traffic delivery latency increases as the solar panel efficiency increases. When the solar panel efficiency is within this region, the network trades the traffic delivery latency for reducing the brown power consumption. In the third region (R3), the network trades the power consumption for reducing the traffic delivery latency. This can be seen from Fig.~\ref{fig:sim_3_brown_power}. The rate of brown power consumption reduction decreases when the solar panel efficiency is about 17\% (the start point of R3) as shown in Fig.~\ref{fig:sim_3_latency}. This indicates that the network emphasizes on reducing the traffic delivery latency when the solar panel efficiency is within R3. In both R2 and R3, the energy-latency tradeoff reduces the value of $\psi(\boldsymbol{\eta})$ as shown in Fig.~\ref{fig:sim_3_obj}. When the solar panel efficiency is within the fourth region (R4), the solar panel efficiency is high enough to enable zero brown power consumption in all BSs while minimizing the traffic delivery latency.
\begin{figure*}
\centering
\hspace*{\fill}
       \begin{subfigure}[b]{0.3\textwidth}
      	    \includegraphics[scale=0.2]{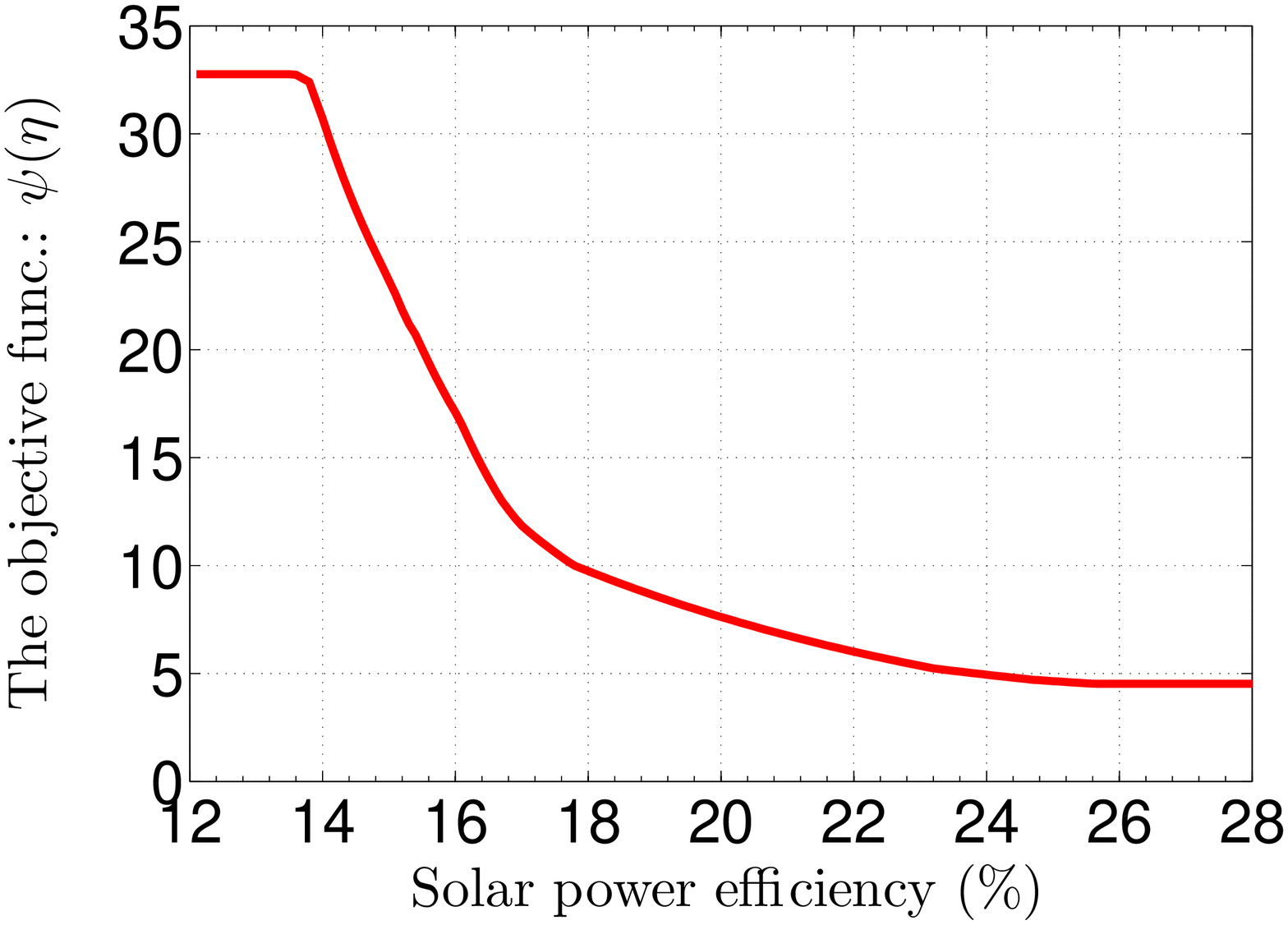}
            \caption{The value of $\psi(\boldsymbol{\eta})$.}
            \label{fig:sim_3_obj}
       \end{subfigure}\hfill
        \begin{subfigure}[b]{0.3\textwidth}
      	    \includegraphics[scale=0.2]{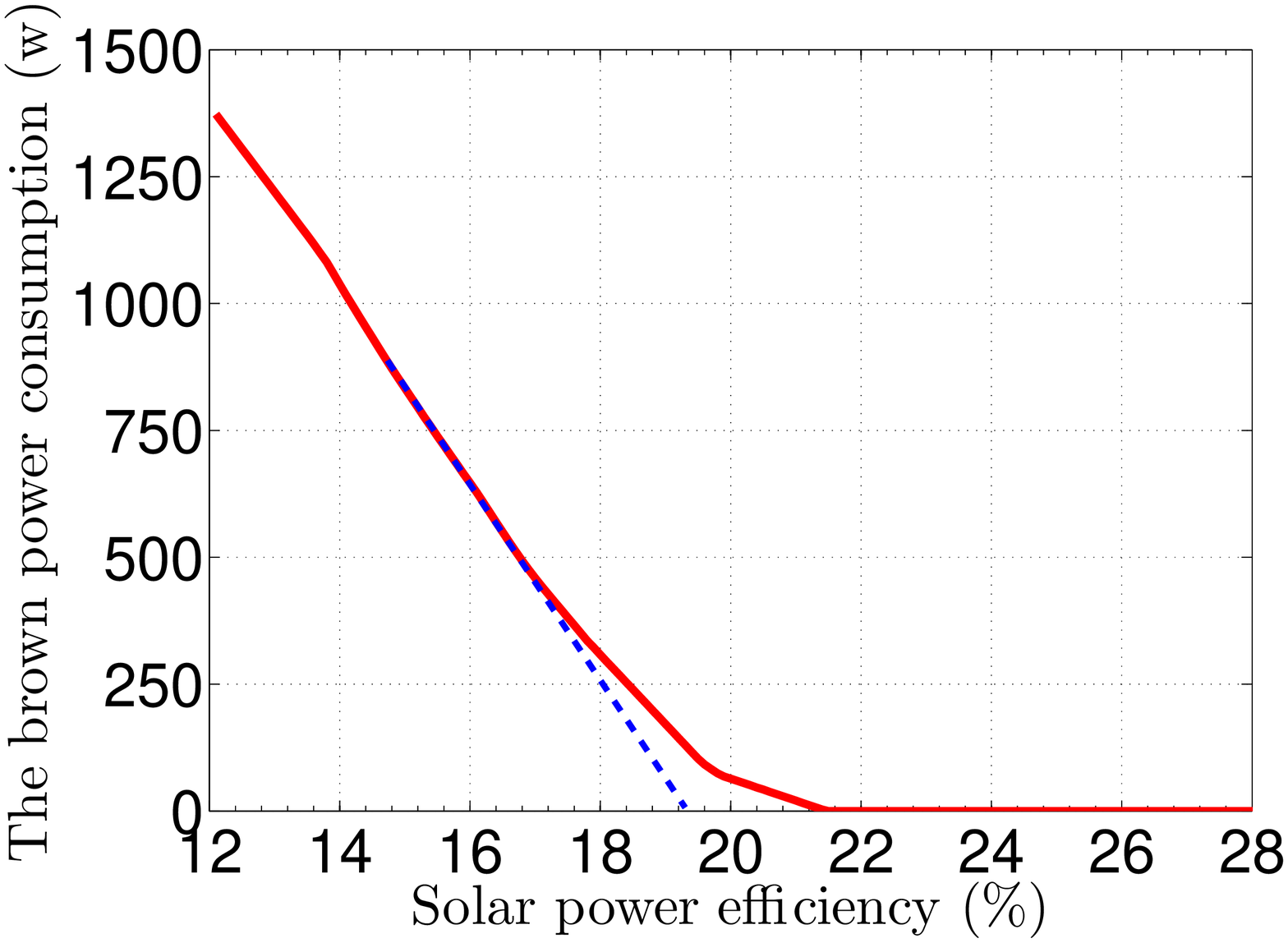}
            \caption{The brown power consumption.}
            \label{fig:sim_3_brown_power}
       \end{subfigure}\hfill
      \begin{subfigure}[b]{0.3\textwidth}
      	    \includegraphics[scale=0.2]{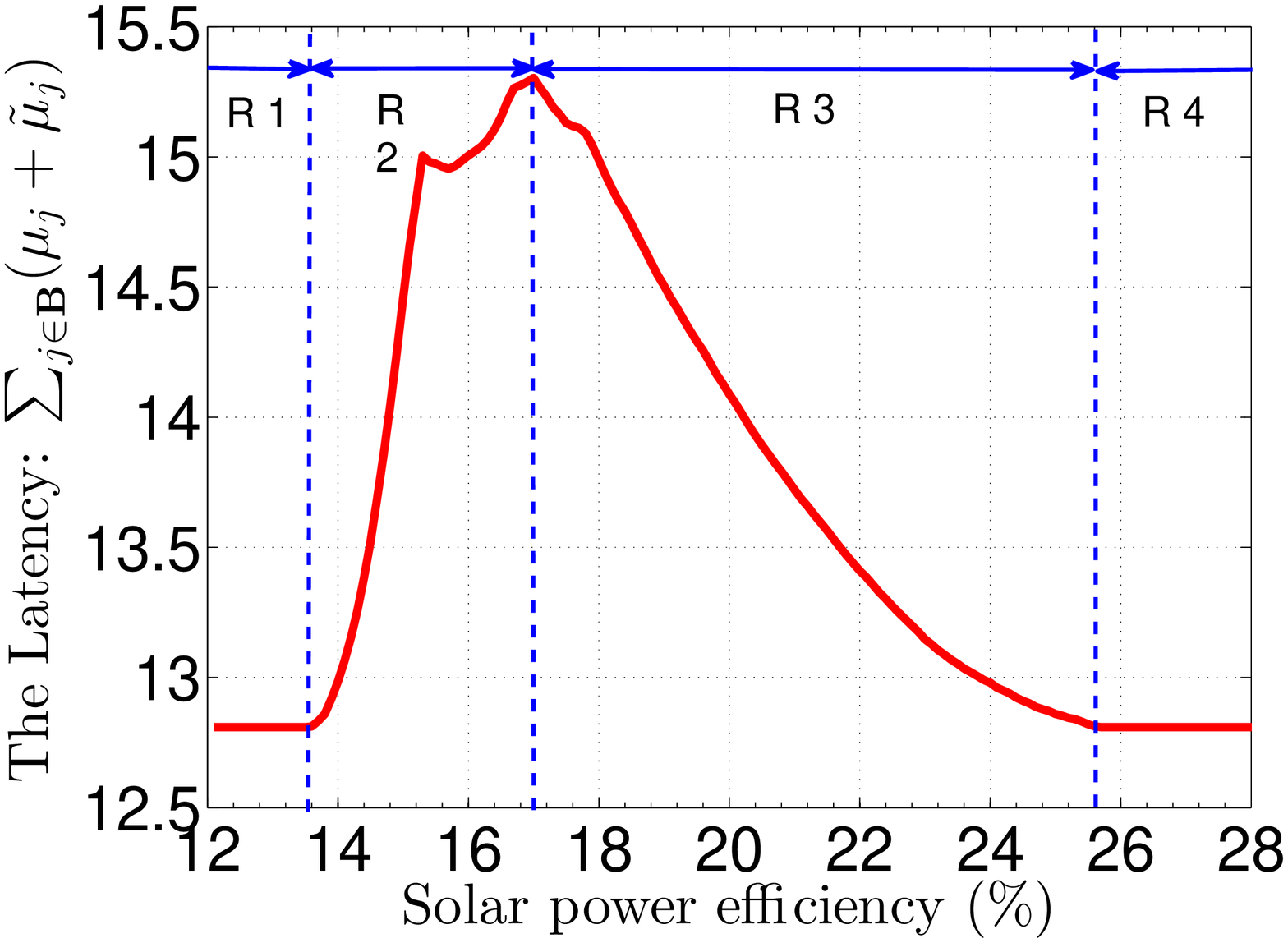}
            \caption{The traffic delivery latency.}
            \label{fig:sim_3_latency}
       \end{subfigure}\hfill
    \caption{%
       The performance of the NUA scheme versus the solar panel efficiency.
     }%
   \label{fig:sim_3_solar_rate}
\end{figure*}

\begin{figure}[!ht]
\centering
\includegraphics[scale=0.3]{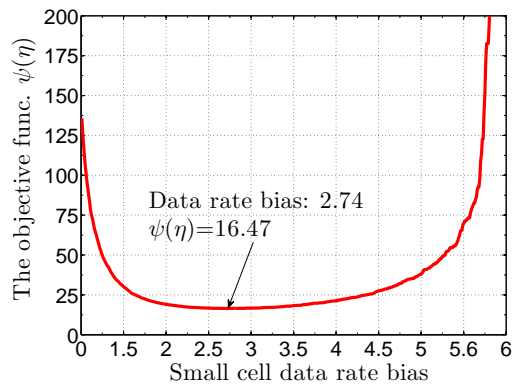}
\caption{The value of $\psi(\boldsymbol{\eta})$ versus data rate bias.}
\label{fig:sim_2_obj}
\end{figure}

Fig. \ref{fig:sim_2_obj} shows the value of $\psi(\boldsymbol{\eta})$ versus the small cell data rate biases under the DRB-NU scheme. The value of $\psi(\boldsymbol{\eta})$ is minimized when the small cell data rate bias equals to 2.74. Given the data rate bias, the network's traffic delivery latency is 17.24 and the brown power consumption is 480.9~$w$. Under the NUA scheme with $\kappa=2$, the network's traffic delivery latency and the brown power consumption are 15.74 and 383.35~$w$, respectively. Therefore, as compared with the DRB-NU scheme, the NUA scheme reduces the traffic delivery latency and the brown power consumption by 8.7\% and 20.28\%, respectively. The NUA scheme has achieved enhanced performance because it allows individual BSs to adapt their network utility advertisements while the DRB-NU scheme only allows to change the data rate bias for an entire tier rather rather than for individual BSs. Another drawback of the DRB-NU scheme is that it does not dynamically respond to the network utility changes. The small cell data rate bias is optimized based on the previous instead of current network conditions, e.g., traffic intensities, backhaul constraints, and green power availabilities.

In Figs. \ref{fig:sim_4_vary_bh}, \ref{fig:sim_6_limited_bh}, \ref{fig:sim_4_area_comp}, and \ref{fig:sim_6_area_comp}, we compare the performance of three traffic load balancing schemes with varying network conditions. The first scheme is the green power and BS latency aware traffic load balancing scheme realized by vGALA with $\kappa=4$ and $\theta=0.5$ \cite{Han:2014:vGALA}. The second one is the NUA scheme that is aware of all network utilities. The third one, referred to as NUA-NC (no cache), is the NUA without awareness of the cache hit ratio. This scheme is realized by the NUA scheme with $\alpha_{j}=0,\;\forall j\in\mathcal{B}$. Fig.~\ref{fig:sim_4_vary_bh} shows the performance of these traffic load balancing schemes versus different backhaul data rates. When the backhaul data rate is very low, e.g., less than 5 Mbps in the simulation, the value of $\psi(\boldsymbol{\eta})$ and the traffic delivery latency under vGALA is very large. This indicates that, without the awareness of backhaul data rates, the traffic load balancing under vGALA congests some BSs in the network. The brown power consumption under vGALA does not change versus the backhaul data rates because vGALA does not consider the traffic delivery latency in backhaul as a performance metric in determining the user association. As the backhaul data rates increase, the value of $\psi(\boldsymbol{\eta})$ under these schemes converges because the backhaul constraint is gradually mitigated. However, the NUA scheme achieves smaller traffic delivery latency as compared with the vGALA scheme because of the awareness of the traffic delivery latency in backhaul.

In the simulation, the value of $\psi(\boldsymbol{\eta})$ is minimized by the NUA scheme. However, the NUA-NC scheme achieves the minimal traffic delivery latency as shown in Fig.~\ref{fig:sim_4_latency}. This is because the NUA scheme aims to minimize the value of $\psi(\boldsymbol{\eta})$ and thus strikes a tradeoff between the traffic delivery latency and the brown power consumption. As a result, compared with the NUA-NC scheme, the NUA scheme consumes less brown power at the cost of an increase of the traffic delivery latency.

The BSs' coverage areas under these schemes are shown in Fig.~\ref{fig:sim_4_area_comp}. The NUA-NC scheme is aware of the backhaul limitation and thus reduces the coverage area of the backhaul constrained SCBS, e.g., SCBS~8. However, owing to the unawareness of the cache hit ratio, the NUA-NC scheme overestimates the traffic load in backhaul and constrains the coverage area of SCBSs, e.g., SCBS~8. The NUA scheme, being aware of the cache hit ratio of individual BSs, accurately estimates the traffic load in the backhaul and thus derives optimal coverage areas for BSs, e.g., increasing the coverage area of SCBS~8, to minimize the value of $\psi(\boldsymbol{\eta})$.

\begin{figure*}
\centering
\hspace*{\fill}
       \begin{subfigure}[b]{0.3\textwidth}
      	    \includegraphics[scale=0.2]{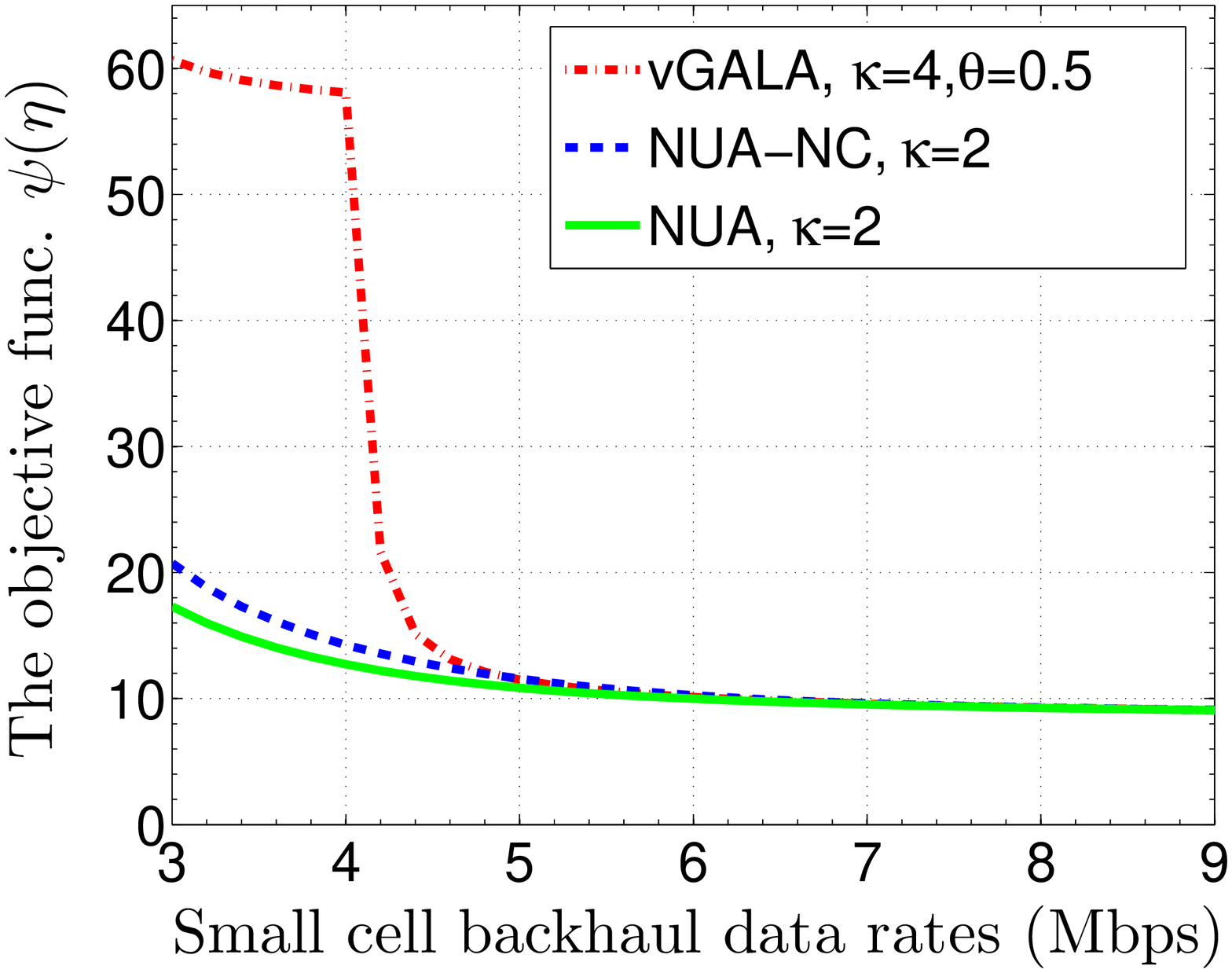}
            \caption{The value of $\psi(\boldsymbol{\eta})$.}
            \label{fig:sim_4_obj}
       \end{subfigure}\hfill
        \begin{subfigure}[b]{0.3\textwidth}
      	    \includegraphics[scale=0.2]{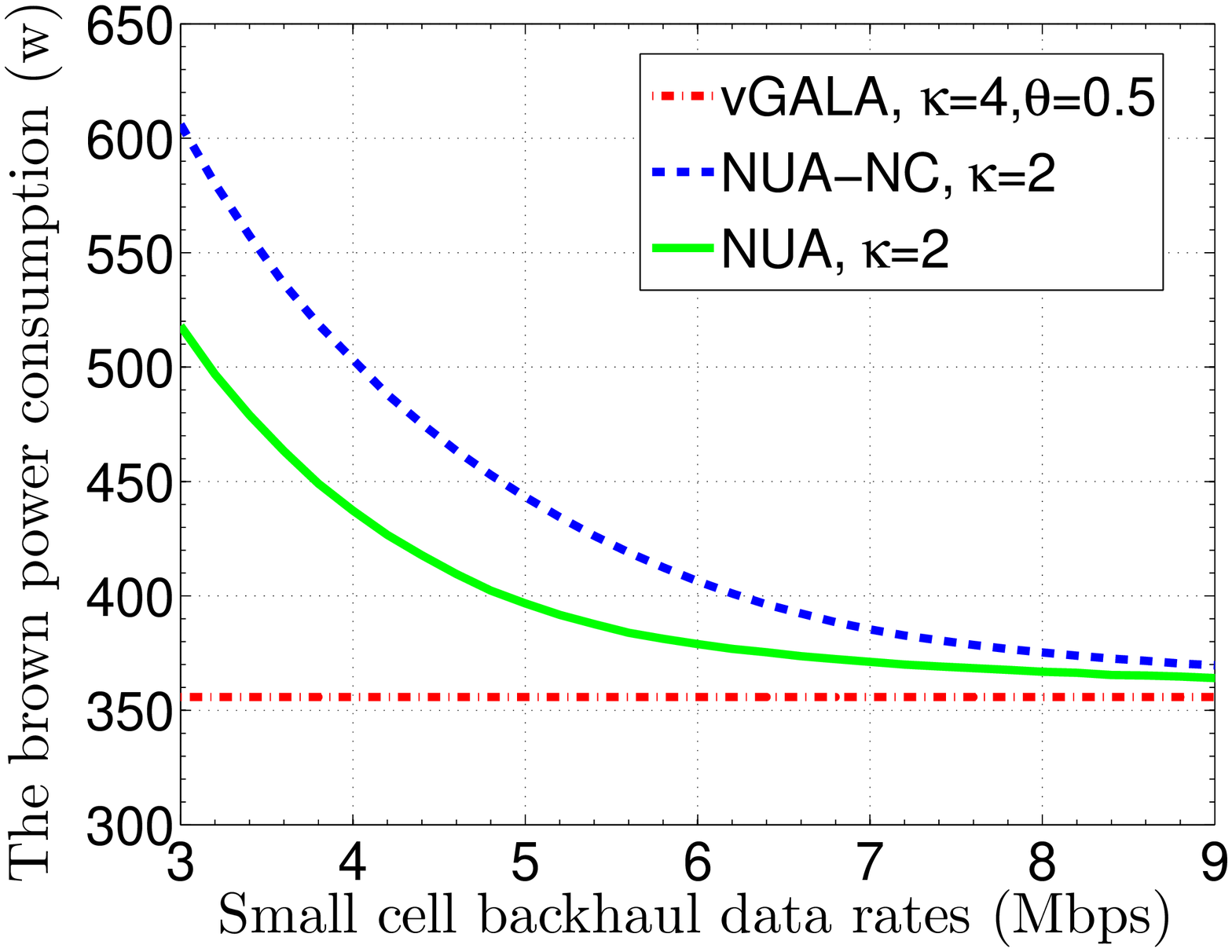}
            \caption{The brown power consumption.}
            \label{fig:sim_4_brown_power}
       \end{subfigure}\hfill
      \begin{subfigure}[b]{0.3\textwidth}
      	    \includegraphics[scale=0.2]{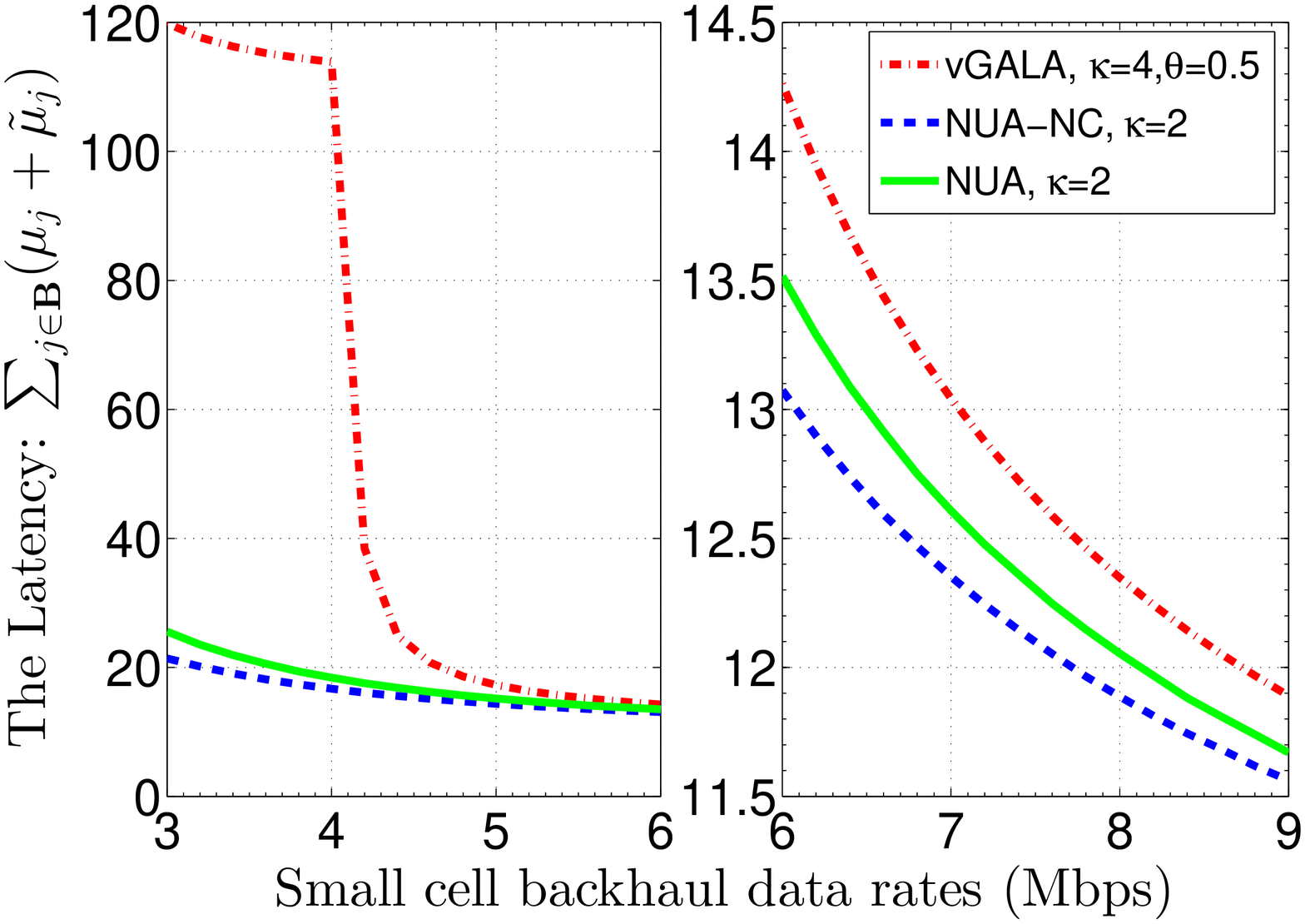}
            \caption{The traffic delivery latency.}
            \label{fig:sim_4_latency}
       \end{subfigure}\hfill
    \caption{%
       The performance of the traffic load balancing schemes versus the backhaul data rates.
     }%
   \label{fig:sim_4_vary_bh}
\end{figure*}
\begin{figure*}
\centering
\hspace*{\fill}
       \begin{subfigure}[b]{0.3\textwidth}
      	    \includegraphics[scale=0.2]{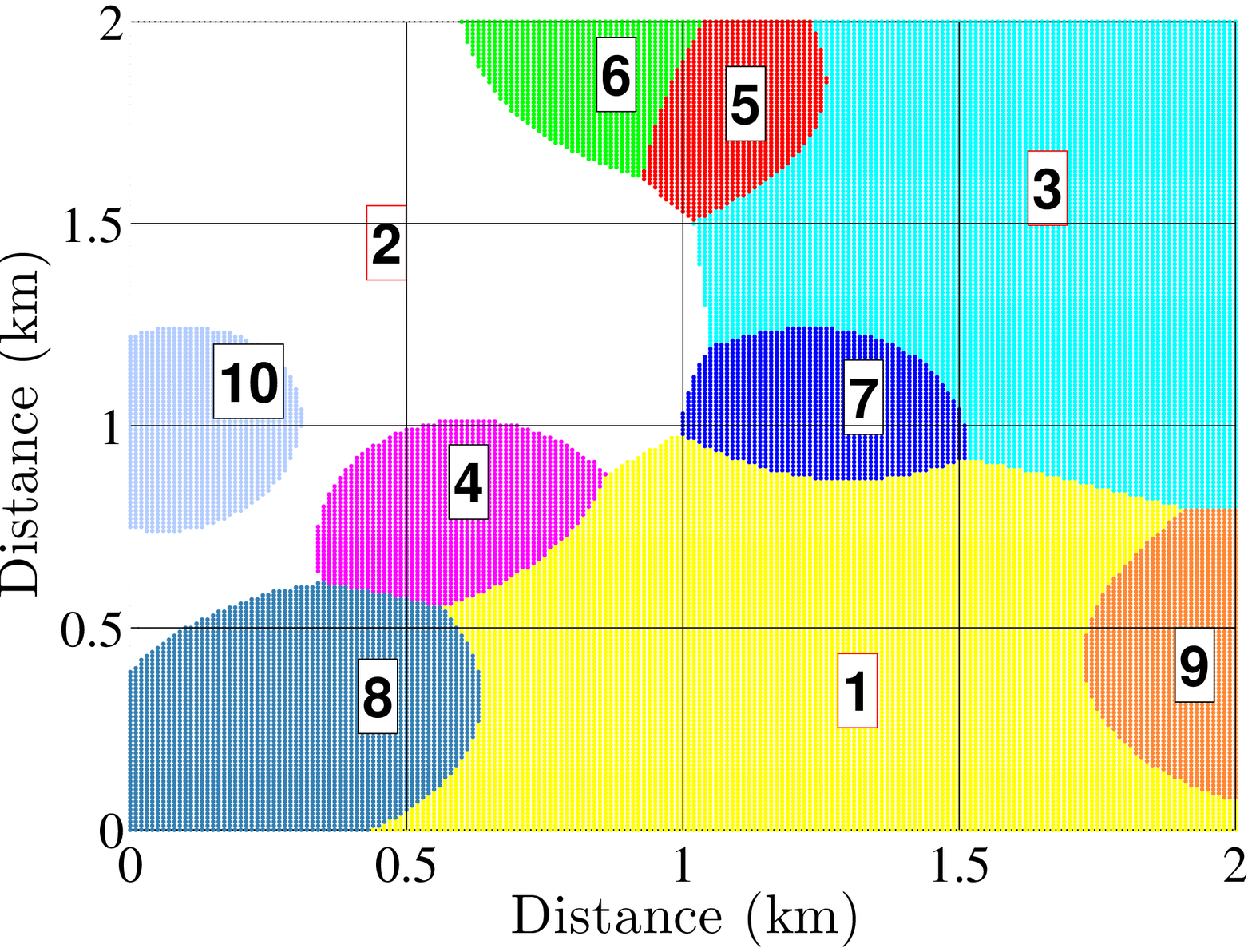}
            \caption{vGALA.}
            \label{fig:sim_4_vgala_area}
       \end{subfigure}\hfill
        \begin{subfigure}[b]{0.3\textwidth}
      	    \includegraphics[scale=0.2]{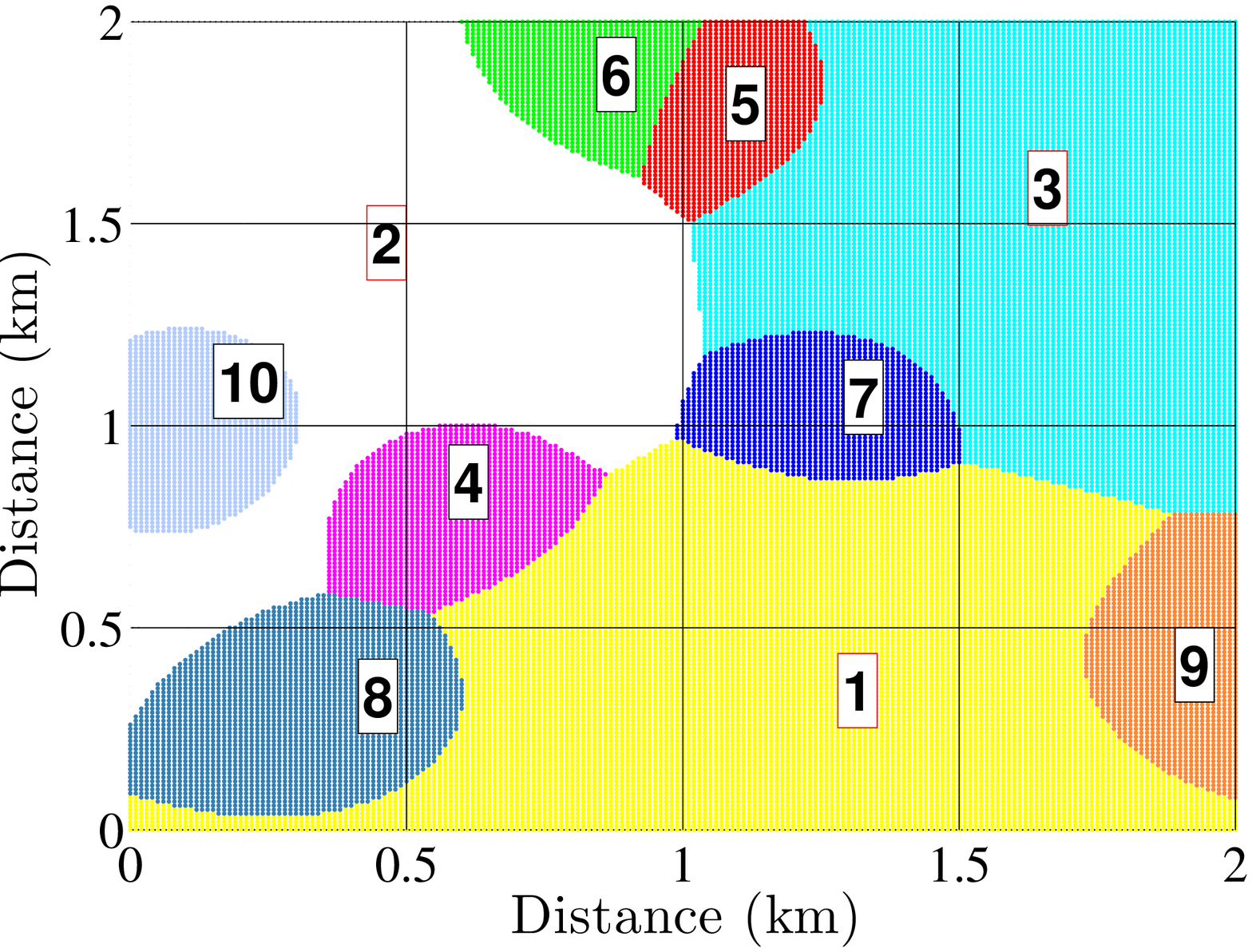}
            \caption{NUA-NC.}
            \label{fig:sim_4_nua_no_cache_area}
       \end{subfigure}\hfill
       \begin{subfigure}[b]{0.3\textwidth}
      	    \includegraphics[scale=0.2]{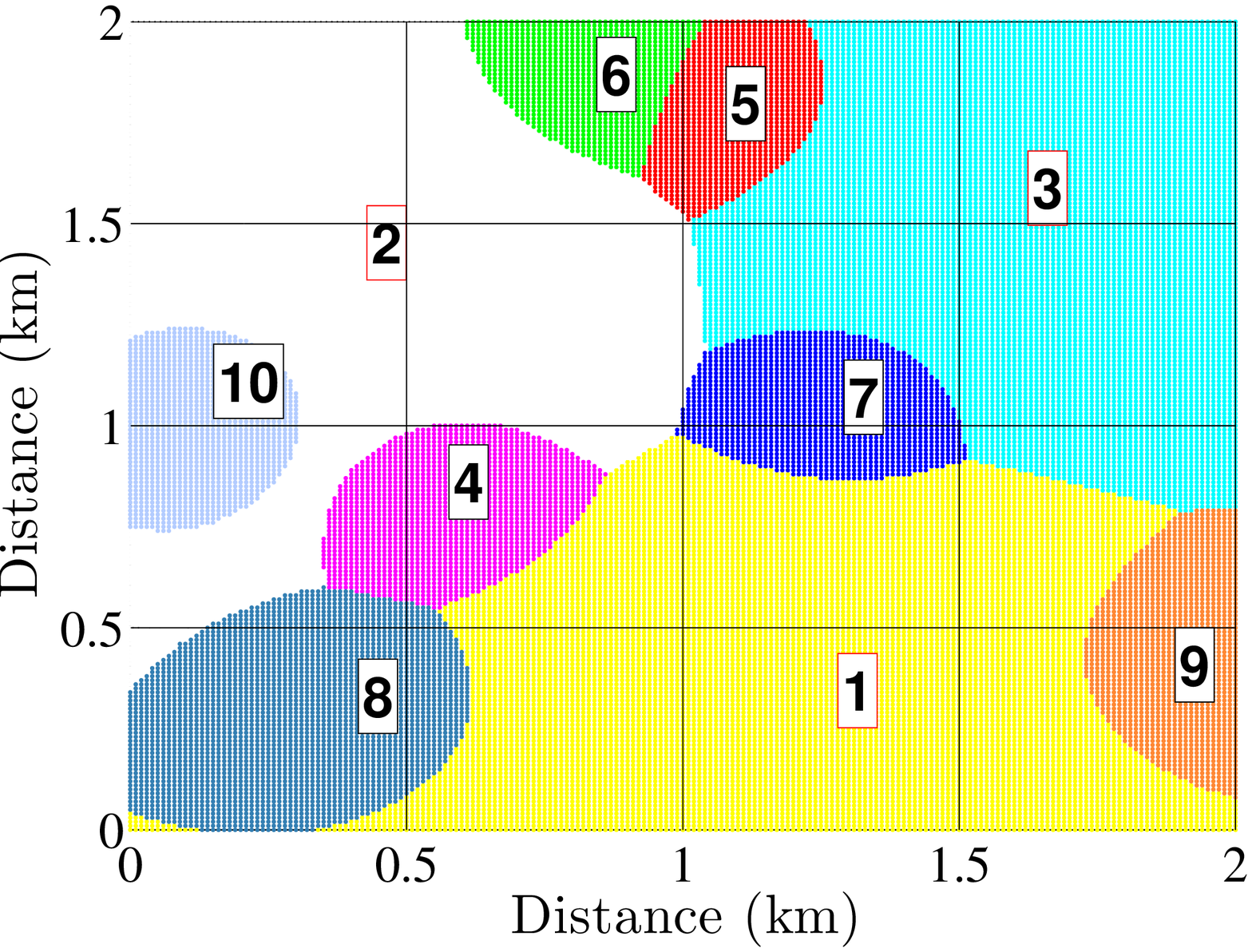}
            \caption{NUA.}
            \label{fig:sim_4_nua_area}
       \end{subfigure}\hfill
    \caption{%
       The coverage areas of different schemes ($R_{4-10}=5$ $Mbps$).
     }%
   \label{fig:sim_4_area_comp}
\end{figure*}

As shown in Fig. \ref{fig:sim_6_limited_bh}, when the backhaul data rate of a SCBS changes, e.g., $R_{5}$ reduces from 5 $Mbps$ to 1 $Mbps$, the NUA and NUA-NC scheme are able to adapt the traffic load balancing according to the backhaul data rate changes. However, the vGALA scheme, without the awareness of backhaul data rate, incurs excessive traffic delivery latency which is 667\% of the traffic delivery latency of the NUA scheme as shown in Fig.\ref{fig:sim_6_latency}. As shown in Fig. \ref{fig:sim_6_area_comp}, both the NUA and NUA-NC schemes are able to reduce the coverage area of SCBS 5. The NUA-NC scheme, because of the unawareness of the cache hit ratio, shrinks the coverage area more than the NUA scheme does.
\begin{figure*}
\centering
\hspace*{\fill}
       \begin{subfigure}[b]{0.3\textwidth}
      	    \includegraphics[scale=0.2]{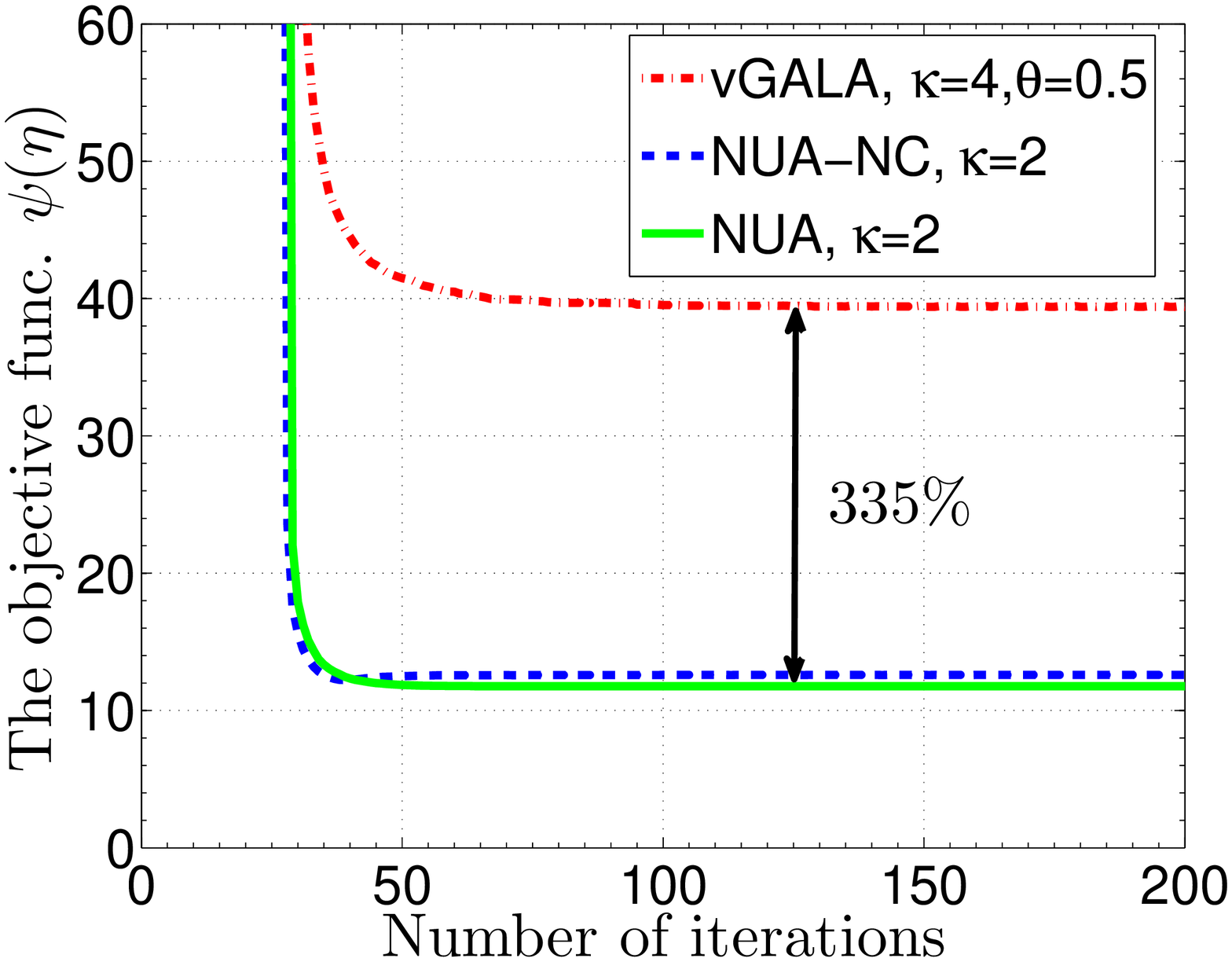}
            \caption{The value of $\psi(\boldsymbol{\eta})$.}
            \label{fig:sim_6_obj}
       \end{subfigure}\hfill
        \begin{subfigure}[b]{0.3\textwidth}
      	    \includegraphics[scale=0.2]{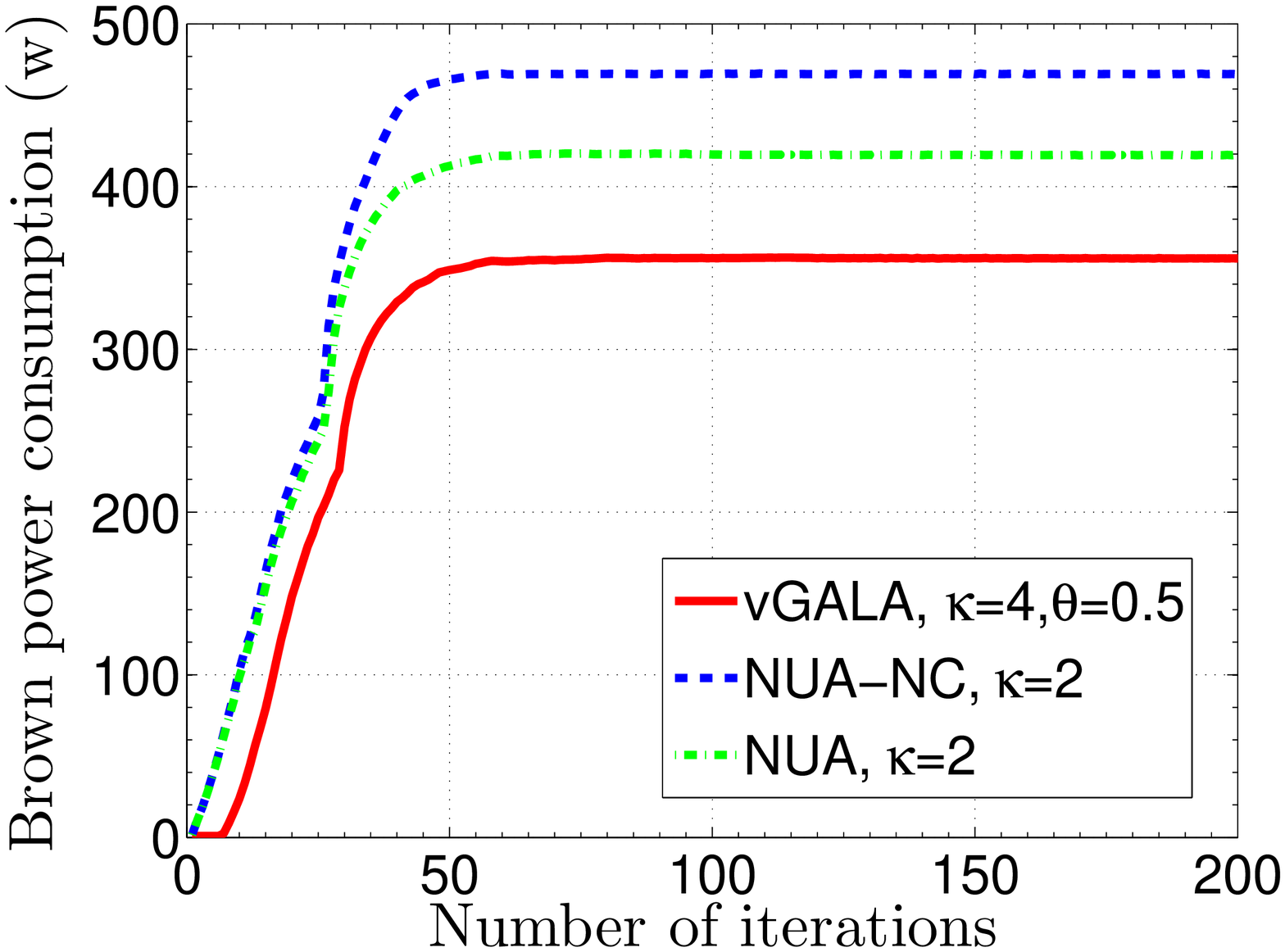}
            \caption{The brown power consumption.}
            \label{fig:sim_6_brown_power}
       \end{subfigure}\hfill
      \begin{subfigure}[b]{0.3\textwidth}
      	    \includegraphics[scale=0.2]{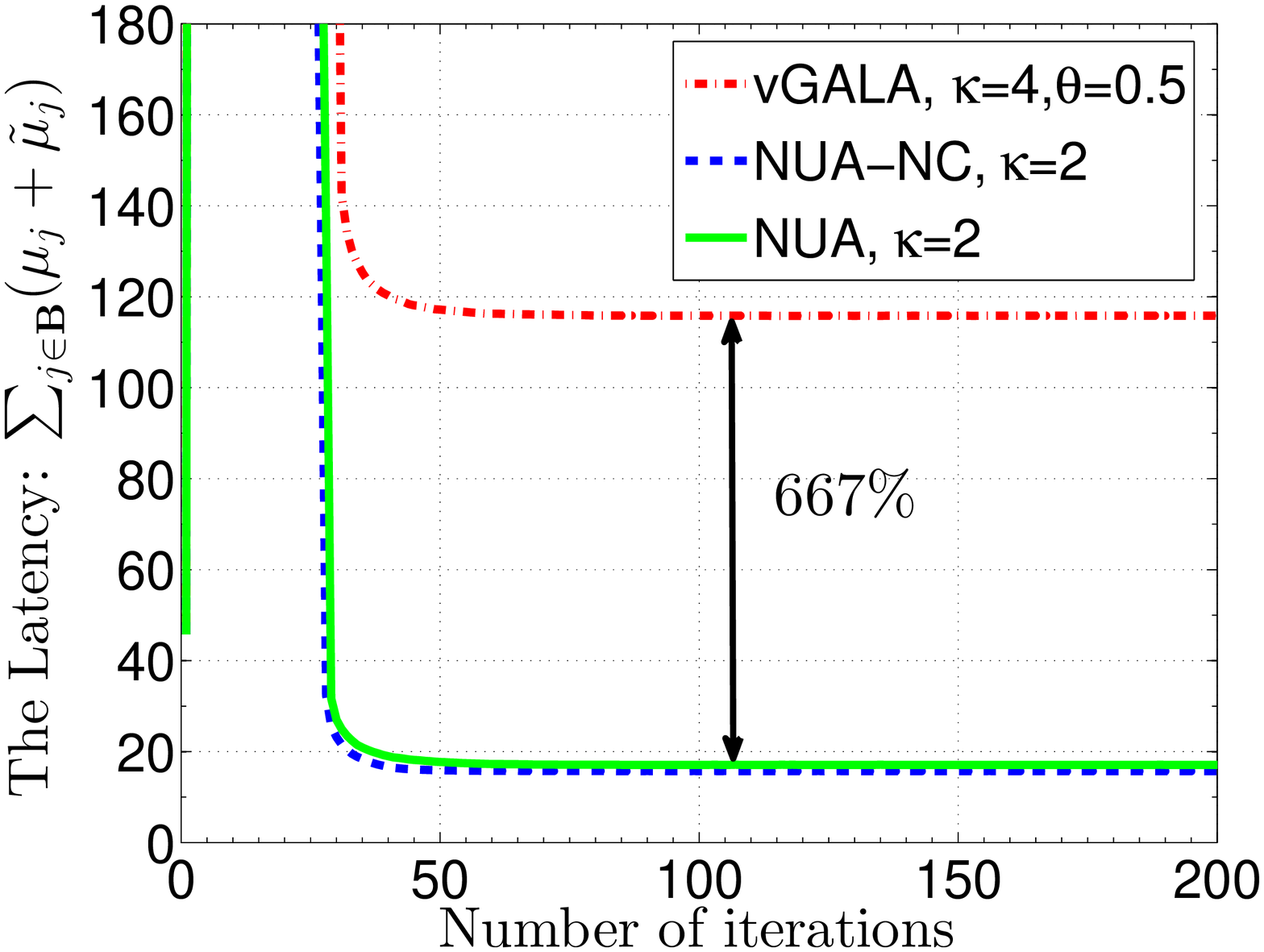}
            \caption{The traffic delivery latency.}
            \label{fig:sim_6_latency}
       \end{subfigure}\hfill
    \caption{%
       The performance comparison ($R_{5}=1$ $Mbps$ and $R_{4,6-10}=5$ $Mbps$).
     }%
   \label{fig:sim_6_limited_bh}
\end{figure*}

\begin{figure*}
\centering
\hspace*{\fill}
        \begin{subfigure}[b]{0.3\textwidth}
      	    \includegraphics[scale=0.2]{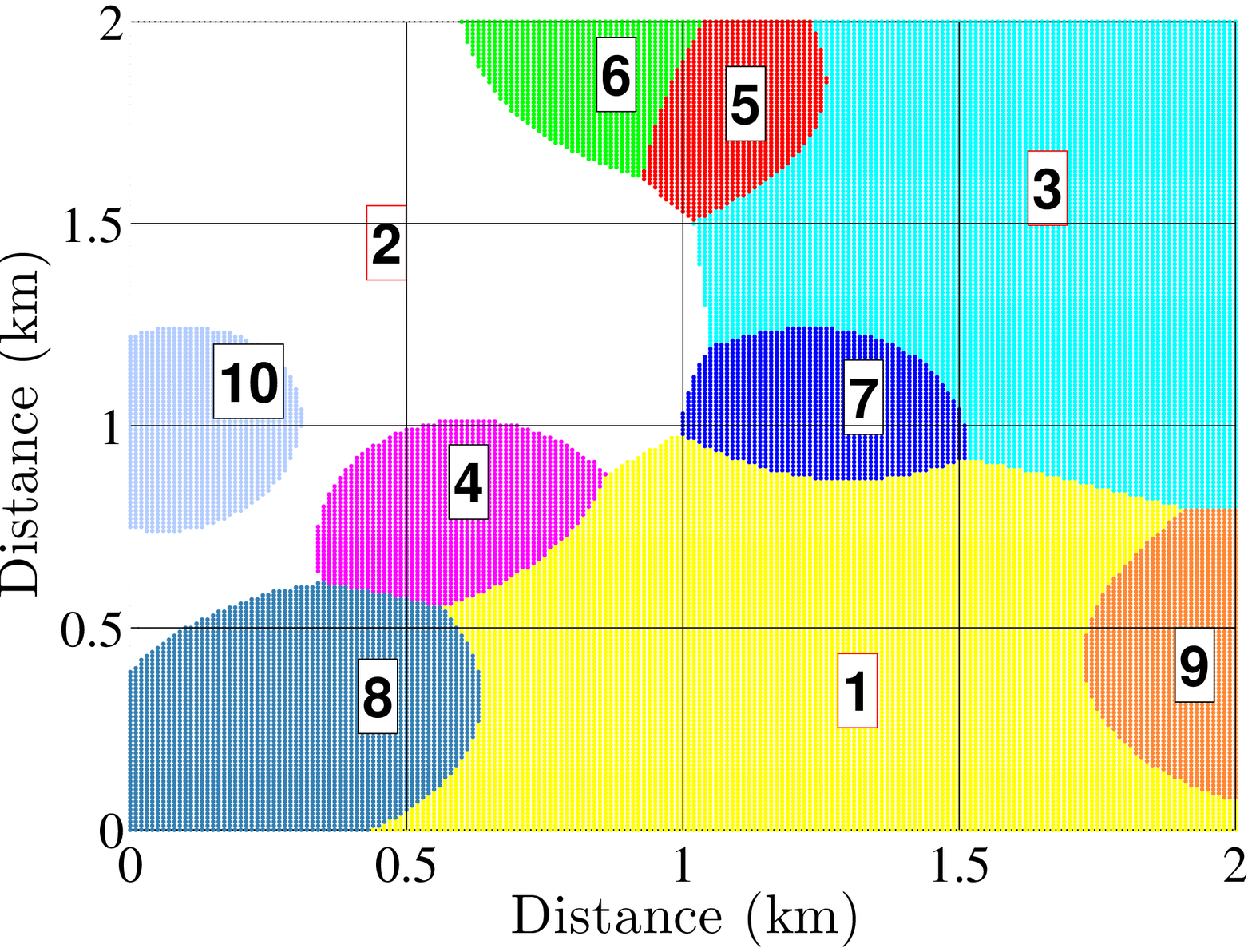}
            \caption{vGALA.}
            \label{fig:sim_6_vgala_area}
       \end{subfigure}\hfill
      \begin{subfigure}[b]{0.3\textwidth}
      	    \includegraphics[scale=0.2]{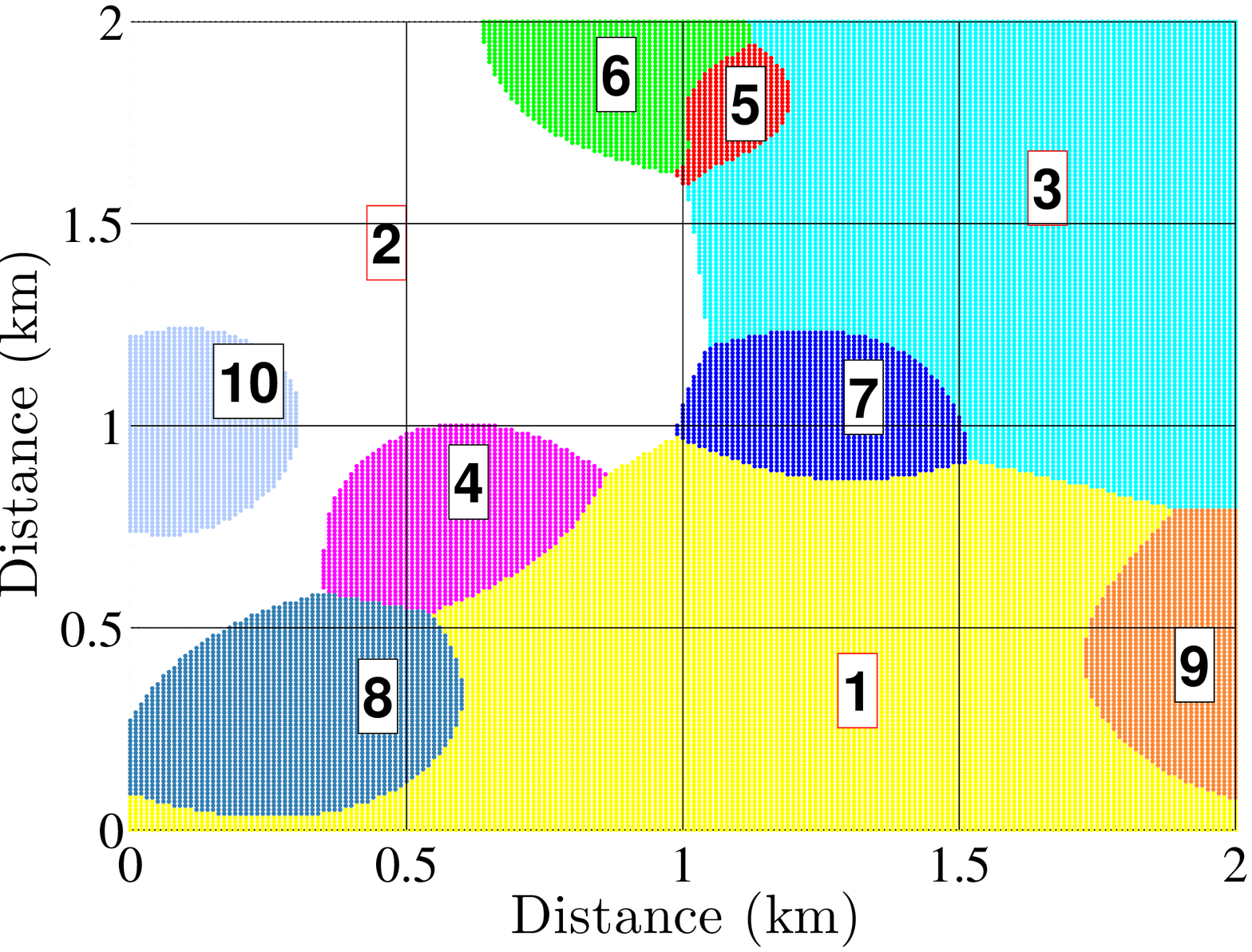}
            \caption{NUA-NC.}
            \label{fig:sim_6_nua_no_cache_area}
       \end{subfigure}\hfill
        \begin{subfigure}[b]{0.3\textwidth}
      	    \includegraphics[scale=0.2]{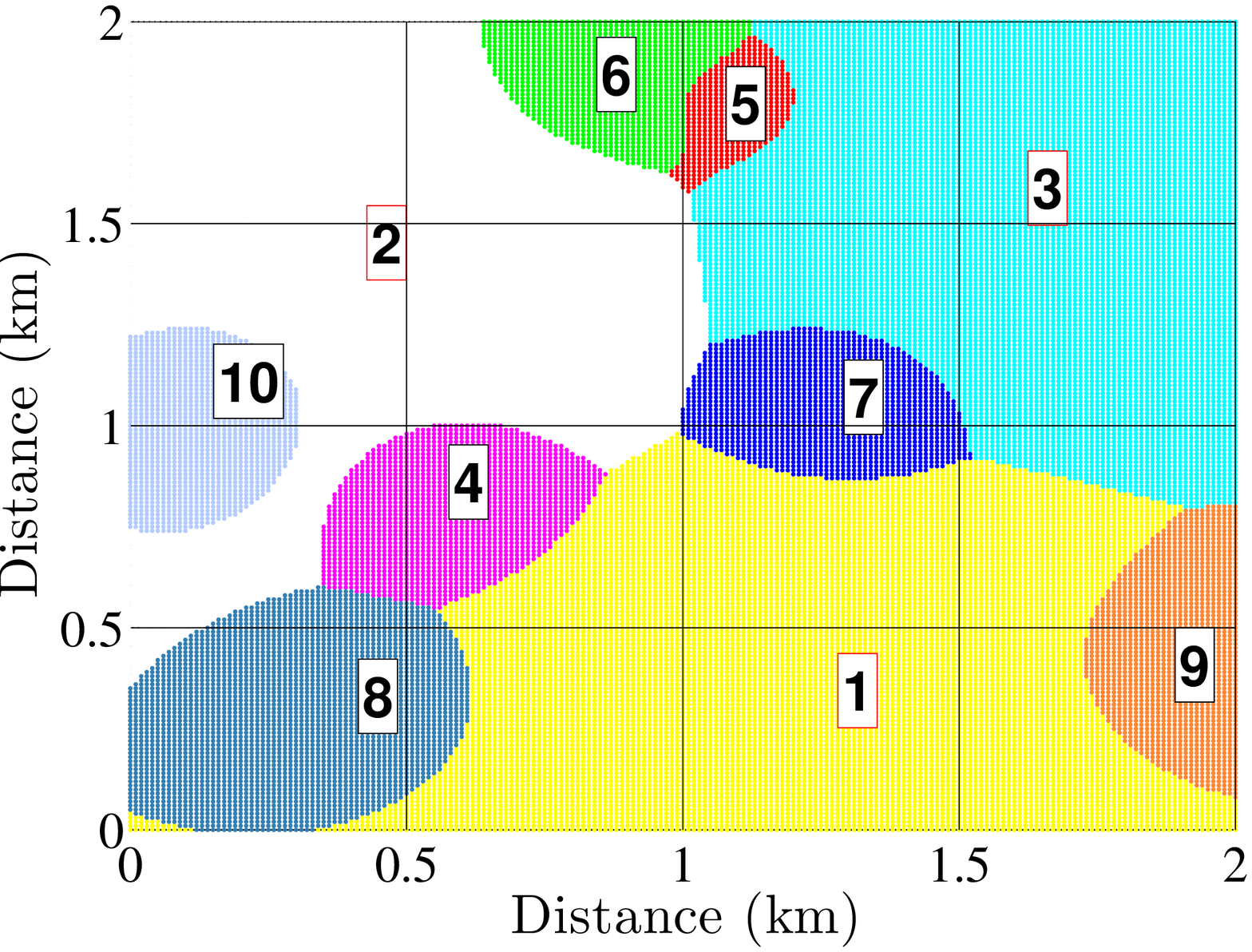}
            \caption{NUA.}
            \label{fig:sim_6_nua_area}
       \end{subfigure}\hfill
    \caption{%
       The coverage areas of different schemes ($R_{5}=1$ $Mbps$ and $R_{4,6-10}=5$ $Mbps$).
     }%
   \label{fig:sim_6_area_comp}
\end{figure*}

Fig. \ref{fig:sim_5_latency} shows the the impact of the cache awareness on the traffic delivery latency. In the simulation, we set $\kappa=0$ for both the NUA scheme and the NUA-NC scheme to focus on the performance of the traffic delivery latency. Thus, both schemes are unaware of the green power utilization. As shown in Fig. Fig. \ref{fig:sim_5_latency}, when the backhaul data rate is small, the cache awareness helps to reduce the traffic delivery latency.

\begin{figure}
\centering
\includegraphics[scale=0.3]{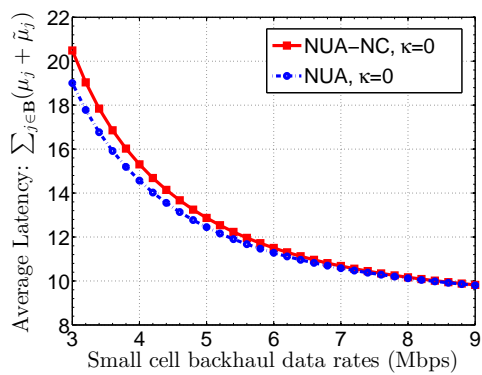}
\caption{The impact of the cache awareness on the traffic delivery latency.}
\label{fig:sim_5_latency}
\end{figure}

\section{Conclusion}
\label{sec:conclusion}
In this paper, we have proposed a network utility aware (NUA) traffic load balancing scheme for backhaul-constrained cache-enabled SCNs with hybrid power supplies. During the procedure of establishing user associations, the NUA traffic load balancing scheme considers four network utilities: green power utilization, the traffic delivery latency in BSs, the traffic delivery latency in backhaul, and the cache hit ratio. By optimizing the user association, the NUA traffic load balancing scheme strikes a tradeoff between the green power utilization and the traffic delivery latency in the network. The NUA traffic load balancing scheme adapts the user association according to the dynamics of green power, BS capacity, backhaul data rates, and the cache hit ratio. It significantly reduces the traffic delivery latency when the network is constrained by the backhaul data rate. Moreover, by adjusting the system parameters, e.g., $\kappa$, the NUA scheme is able to adjust the tradeoff between the brown power consumption and the traffic delivery latency.
\bibliographystyle{IEEEtran}
\bibliography{mybib}

\end{document}